\renewcommand{\mid}{\hspace{.5mm}\vert\hspace{.5mm}}
\newacronym{CIR}{CIR}{Cox--Ingersoll--Ross}
\newacronym{WF}{WF}{Wright--Fisher}
\newacronym{FFBS}{FFBS}{Forward Filtering Backward Sampling}
\newacronym{MCMC}{MCMC}{Monte Carlo Markov Chain}
\newacronym{HMM}{HMM}{Hidden Markov Model}
\newacronym{MH}{Metropolis--Hastings}{Metropolis--Hastings}
\newtheorem{assumption}{Assumption}
\newtheorem{theorem}{Theorem}
\newtheorem{proposition}[theorem]{Proposition}
\def\d{\mathrm{d}}
\def\P{\mathbb{P}}
\def\X{\mathcal{X}}
\def \Y{\mathcal{Y}} 
\def\M{\mathcal{M}} 
\def\Th{\varTheta} 
\def\Z{\mathbb{Z}_+}
\def\r{r} 
\def\N{\mathbb{N}}
\def \R{\mathbb{R}}
\def \E{\mathbb{E}}
\def\F{\mathcal{F}}
\def \aa {\bs\alpha}
\def \yy {\mathbf{y}}
\def \ee {\mathbf{e}}
\def \mm {\mathbf{m}}
\def \nn {\mathbf{n}}
\def \ii {\mathbf{i}}
\def \oo {\mathbf{0}}
\def \ll {\mathbf{l}}
\def \kk {\mathbf{k}}
\def \xx {\mathbf{x}}
\def \Ga {\text{Ga}}
\def \NBin {\text{NB}}
\def \Dir {\text{Dir}}
\def \DirMN {\text{DM}}
\def\d{\mathrm{d}}
\def\X{\mathcal{X}}
\def \Y{\mathcal{Y}} 
\def\M{\mathcal{M}} 
\def\Th{\varTheta} 
\def\Z{\mathbb{Z}}
\def\r{r} 
\def\N{\mathbb{N}}
\def \P{\mathbb{P}}
\def \R{\mathbb{R}}
\def \E{\mathbb{E}}
\def\F{\mathcal{F}}
\newcommand{\bs}[1]{\boldsymbol{#1}}
\newcommand{\norm}[1]{|{#1}|}
\newcommand{\B}{\mathcal{B}} 
\newcommand{\MM}{\mathrm{\bf M}} 
\newcommand{\aMM}{\overleftarrow\MM} 
\newcommand{\avt}{\!\overleftarrow\vartheta\!}
\def \aa {\bs\alpha}
\def \yy {\mathbf{y}}
\def \ee {\mathbf{e}}
\def \ii {\mathbf{i}}
\def \jj {\mathbf{j}}
\def \mm {\mathbf{m}}
\def \nn {\mathbf{n}}
\def \ii {\mathbf{i}}
\def \oo {\mathbf{0}}
\def \ll {\mathbf{l}}
\def \kk {\mathbf{k}}
\def \xx {\mathbf{x}}
\def \yy {\mathbf{y}}
\newcommand*{\addFileDependency}[1]{
  \typeout{(#1)}
  \@addtofilelist{#1}
  \IfFileExists{#1}{}{\typeout{No file #1.}}
}
\newcommand*{\myexternaldocument}[1]{%
    \externaldocument{#1}%
    \addFileDependency{#1.tex}%
    \addFileDependency{#1.aux}%
}
\begin{document}

\title{\vspace{-15mm}\bf   \LARGE  Exact inference for a class of non-linear \\ hidden Markov models on general state spaces}

\author{{\sc Guillaume  Kon Kam King} \\ 
\emph{Universit\'e Paris-Saclay, INRAE, MaIAGE}\\ 
78350, Jouy-en-Josas, France \\ guillaume.kon-kam-king@inrae.fr\\
\vspace{0.5em}\\
{\sc Omiros Papaspiliopoulos}\\ 
\emph{ICREA and Universitat Pompeu Fabra} \\ Ram\'on Trias Fargas 25-27, 08005, Barcelona, Spain \\ omiros.papaspiliopoulos@upf.edu\\
\vspace{0.5em}\\
{\sc Matteo Ruggiero}\\ \emph{University of Torino and Collegio Carlo Alberto} \\
Corso Unione Sovietica 218/bis, 10134, Torino, Italy \\ matteo.ruggiero@unito.it}

\maketitle

\abstract{
\begin{quote}
Exact inference for hidden Markov models requires the evaluation of all distributions of interest - filtering, prediction, smoothing and likelihood -  with a finite computational effort. This article provides sufficient conditions for exact inference for a class of hidden Markov models on general state spaces given a set of discretely collected indirect observations linked non linearly to the signal, and a set of practical algorithms for inference. The conditions we obtain are concerned with the existence of a certain type of dual process, which is an auxiliary process embedded in the time reversal of the signal, that in turn allows to represent the distributions and functions of interest as finite mixtures of elementary densities or products thereof. We  describe explicitly how to update recursively the parameters involved, yielding qualitatively similar results to those obtained with Baum--Welch filters on finite state spaces. We then provide practical algorithms for implementing the recursions, as well as approximations thereof via an informed pruning of the mixtures, and we show superior performance to particle filters both in accuracy and computational efficiency. 
The code for optimal filtering, smoothing and parameter inference is made available in the Julia package DualOptimalFiltering.
\end{quote}
}

\newpage
\tableofcontents 

\normalsize

\section{Introduction}\label{sec:intro}

Inference for \glspl{HMM}, is a fundamental statistical problem concerned with learning the parameters and the trajectory of an unobserved Markov process, called \emph{signal}, given noisy or indirect observations, typically collected at discrete times. 
Sometimes also referred to as state-space models, \glspl{HMM} have found widespread application in a variety of frameworks that include genomics \citep{brown1993using,yau2011bayesian,guha2008bayesian,titsias2016statistical}, proteomics \citep{bae2005prediction}, time series analysis \citep{Sarkar2019}, temporal clustering \citep{crane2017hidden}, signal processing \citep{fox2011sticky},
econometrics \citep{hamilton1990analysis,chib1996calculating}, brain imaging, target tracking and animal movement \citep{quick2017hidden,langrock2015nonparametric}, to mention a few examples. General treatments of \glspl{HMM} can be found for example in \cite{chopinIntroductionSequentialMonte2020, book:54642,sarkka2013bayesian}.

Let $\{X_{t},t\ge0\}$ be a Markov process on $\X\subseteq\R^{K}$, with initial distribution $\nu_{0}$ and transition density $P^\psi_{t}(x'\mid x)$, parametrised by a finite-dimensional vector $\psi$. 
The process $\{X_{t},t\ge0\}$ is assumed to be unobserved, and referred to as \emph{hidden signal}, and to evolve in continuous time. 
Observations $Y_{t}\in \Y\subseteq \R^{D}$  are taken to be conditionally independent of everything else given the current value of the signal, to which the relate through the \emph{emission density} $f^\psi_{x_{t}}(\cdot)$, parametrised by both $\psi$ and $x_{t}$, whereby $ Y_t \mid (X_t=x_t) \overset{\text{ind}}{\sim} f^\psi_{x_{t}}(\cdot)$. We assume observations are collected at discrete times $0\le t_{0}< t_{1}<\ldots$.

In this framework, the quantities of statistical interest are typically given by the density of the signal given past observations
$p^\psi(x_{t_{i+k}}\mid y_{t_{0}},\ldots,y_{t_{i}})$, i.e., the \emph{predictive distribution}; 
or given observations up to present time $p^\psi(x_{t_{i}}\mid y_{t_{0}},\ldots,y_{t_{i}})$, 
called \emph{filtering distribution}; 
or given past, present and future observations  $p^\psi(x_{t_{i-k}}\mid y_{t_{0}},\ldots,y_{t_{i}})$, called  \emph{smoothing distribution}. 
In addition, the likelihood of the observations $p^\psi(y_{t_{0}},\ldots,y_{t_{i}})$ is a primary object of interest as well, as it allows performing inference on the model parameters $\psi$, often key quantities in an applied context. 
The likelihood is obtained here by integrating out the hidden trajectory of the signal from the joint density $p^\psi(x_{t_{0}},\ldots,x_{t_{i}},y_{t_{0}},\ldots,y_{t_{i}})$. 
Whenever it causes no confusion, we will drop the superscript $\psi$ for notational simplicity, and return to the problem of drawing inference on $\psi$ in \Cref{subsec:inference_param}.

Performing exact inference for \glspl{HMM} entails being able to compute the above distributions with a finite computational effort. Two important classes of models have long been known to allow such computation. 
The first requires the signal state space $\X$ to be a finite set, whereby the quantities of interest are obtained through the Baum--Welch filter (see \citealp{book:54642}) and elaborations thereof. 
The second is given by linear Gaussian systems (e.g.~Ornstein--Uhlenbeck signals and linearly linked Gaussian emissions), in which case all quantities of interest are obtained by updating parameters of Gaussian distributions through the celebrated Kalman--Bucy filter and elaborations thereof. 
The common feature of the two above cases is the existence of a finite-dimensional process which completely characterises the distributions of interest, so that these can be obtained by appropriately updating this process, called \emph{finite-dimensional filter}. The \emph{computational complexity} of the filter is given by the number of operations needed for performing such updates, which for finite-dimensional filters grows linearly in the number of observations. Outside the above mentioned classes, finite-dimensional filters are typically rare and difficult to obtain. 
See \cite{Ferrante1992,Ferrante1990,ferrante1998finite,Gunther1981Finite,runggaldier2001sufficient} for general conditions. 

A major breakthrough in the study of \glspl{HMM} was achieved in \cite{chaleyat2006computable} (see also \citealp{genon2003non,genon2004random}), who introduced the notion of  \emph{computable inference} for \glspl{HMM}. This applies to models for which the distributions of interest can be characterised by a finite-dimensional process whose size, however, can increase as the number of observations increases. 
In such cases, the recursive updates can be shown to have polynomial computational complexity in the number of observations. 
In this framework,  \cite{Papaspiliopoulos2014a} identified a structural property of the transition density of the signal that permits computable filtering. 
They showed that if the signal has a dual process, i.e.~an auxiliary process embedded in its time reversal, given by a pure-death process on a multidimensional grid subordinated to an ODE, then all filtering distributions are finite mixtures of parametric densities which are conjugate to the emission density of the observations. \cite{Papaspiliopoulos2016} in turn extended these results to signals given by two measure-valued processes, namely the Fleming--Viot diffusion and a branching measure-valued diffusion, and more recently \cite{ascolaniPredictiveInferenceFleming2020} applied their results to Bayesian predictive inference in a nonparametric framework. 

While \cite{Papaspiliopoulos2014a,Papaspiliopoulos2016} focused on computable filtering, the present article obtains theory and practical algorithms for the whole agenda of inference with \glspl{HMM}. 
Under a set of sufficient conditions essentially analogous to those in \cite{Papaspiliopoulos2014a}, we show that  all distributions of interest for the signal can be expressed as finite mixtures of elementary densities, and the likelihood of the observations takes the form of a finite product of mixtures. We provide explicit  recursive formulae that describe the parameters updates, and also show how to obtain samples from the joint smoothing distribution of the signal. 
Moreover, we detail practical algorithms for computable inference with this class of \glspl{HMM} and propose an automatic mixture pruning scheme that  results in linear computational costs, at the expense of some approximation error. 
The pruning approximations are based on the fact that although the number of components in the finite mixtures can grow rapidly, the number of components carrying most of the probability mass is seemingly stationary. See \Cref{fig:saturation}.
It thus seems natural to approximate the mixtures by, e.g., pruning all components with negligible mass, resulting in a roughly constant number of components. 
In Section \ref{sec: application} we discuss this and several pruning strategies.
\begin{figure}[t]
\begin{center}
\includegraphics[width = .8\textwidth]{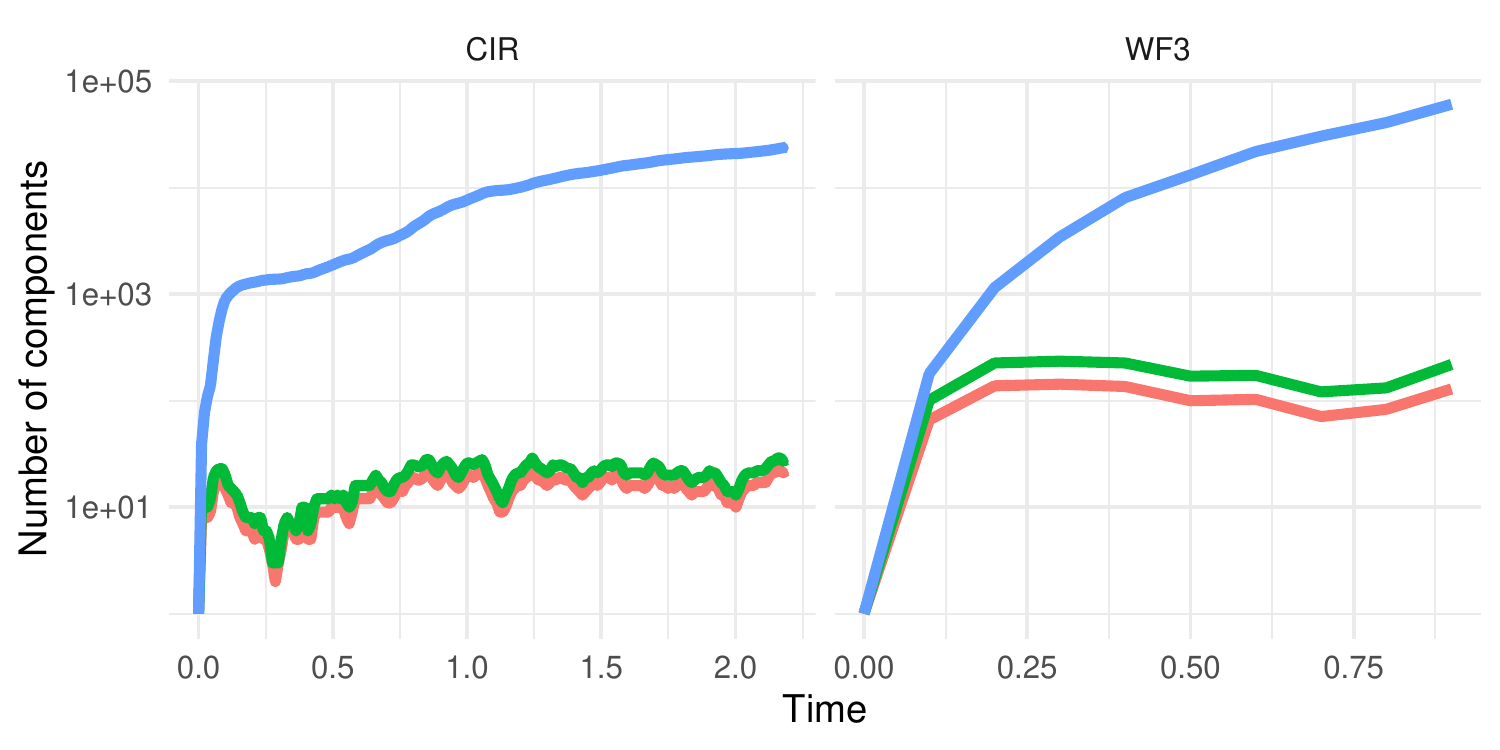}
\begin{quote}
\caption{ Number of components (in log scale) in the filtering densities as a function of time for two models illustrated in detail later (a Cox--Ingersoll--Ross process and a 3-component Wright--Fisher process), needed to account for 95\% (red), 99\% (green) and 100\% (blue) of the mass.\label{fig:saturation}}
\end{quote}
\end{center}
\end{figure}

By devising appropriate metrics for evaluating the quality of the mixture approximations, we compare the performance of our schemes against suitable particle filters and find superior performance. 
\Cref{likelihood_performance_teaser} provides a glimpse into these results, showing that for the two classes of models used in \Cref{fig:saturation}, computing the likelihood with one of the proposed approximation strategies outperforms particle filters both in accuracy and in computational efficiency.
\begin{figure}[t!]
\begin{center}
\includegraphics[width = .7\textwidth]{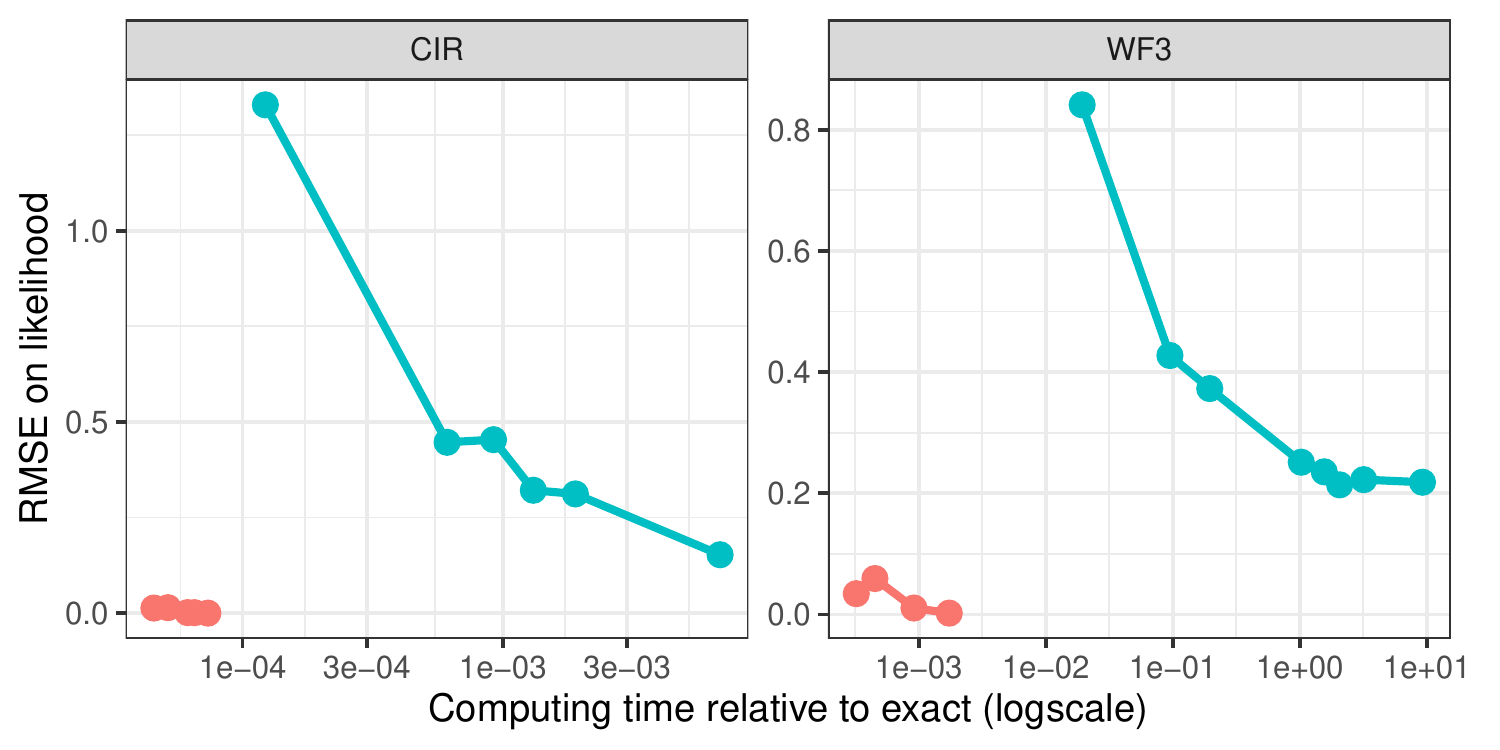}
\end{center}
\begin{quote}
\caption{ 
Root mean squared error for computing the likelihood, as a function of computing time, for one approximation strategy proposed later (red dots, for various levels of approximation) compared with using a particle filter (blue dots, for various number of particles), for the models in \Cref{fig:saturation}.
The computing time is relative to the time needed to compute the likelihood exactly (feasible for these model using the results presented later).  
\label{likelihood_performance_teaser}}
\end{quote}
\end{figure}

The code for performing optimal filtering, smoothing and parameter inference using the result presented here is made available as the Julia package  \texttt{DualOptimalFiltering}\footnote{Available at \url{https://github.com/konkam/DualOptimalFiltering.jl}.}. 
We also implemented a novel particle filtering algorithm using the recent exact sampling method of \cite{jenkins2017exact}, with the algorithm for exact sampling made available as the Julia package  \texttt{ExactWrightFisher}\footnote{Available at \url{https://github.com/konkam/ExactWrightFisher.jl}.}.

The rest of the paper is organised as follows.  
\Cref{sec: results} presents our theoretical results in terms of sufficient conditions for computable inference, details the associated general recursive formulae for updating the distributions of interest, and illustrates how this is concretely done in the case of a \gls{CIR} and a 3-component \gls{WF} signal, by describing explicitly the parameter updates for these models.
\Cref{sec:joint_inference} shows how to sample from the joint smoothing distribution in order to generate trajectories of the signal conditional on the data, and illustrates the strategy for the two above models. While for the first this comes out quite naturally from the computation, for the WF it requires a non trivial argument, which leverages on recent results by \cite{jenkins2017exact}.
\Cref{sec: application} discusses the algorithmic implementation and proposes some acceleration strategies via informed pruning of the mixtures, whereas  \Cref{sec: numerical experiments} presents numerical experiments concerning signal recovery, parameter estimation via maximum likelihood and full Bayesian inference via \gls{MCMC} algorithms. 
We compare the performance of our strategies against particle filtering. 
Concluding remarks are provided in a brief discussion. The most relevant proofs are included in the Appendix, while a few additional results are deferred to the Supplementary Material.


\section{Exact recursive inference for a class of HMMs}\label{sec: results}

\subsection{Setting and inferential goals}\label{sec: setting}

Let the hidden signal $X$ be a continuous-time Markov process on $\X\subseteq\R^{K}$ with initial distribution $\nu_{0}$ and transition density $P_{t}$. Let also $f_{x_{t}}(\cdot)$ be the \emph{emission densities}, i.e., the conditional densities of $Y_{t}$ given $X_t=x_t$, with a common dominating measure on $\Y$ for all $x_{t}\in \X$. 
Denote by $x_{0:T}:=(x_{0},\ldots,x_{T})$ the signal values at discrete times $0=t_{0}<\ldots<t_{T}$ and by $y_{0:T}:=(y_{0},\ldots,y_{T})$ the observations collected at those times. For notational simplicity, and without loss of generality, we assume only one data point is observed at each time, and that the observation times are equally spaced by an interval $\Delta=t_{i}-t_{i-1}$. We therefore denote by $P_{\Delta}$ the transition density of the signal on such intervals, write $x_{i}:=x_{t_{i}}$ and $ y_{i}:=y_{t_{i}}$, and (with some abuse) refer to $t_{i}$ as time $i$.

Inference in this setting is concerned with evaluating various conditional distributions of the signal given the observations. These are typically classified into: \emph{predictive densities}
\begin{equation}\label{predictive general}
\nu_{i|0:i-1}(x_{i}):=p(x_{i}\mid y_{0:i-1}) =\int_{\X} p(x_{i-1}\mid y_{0:i-1})P_{\Delta}(x_{i}\mid x_{i-1})\d x_{i-1}
\end{equation} 
which evaluate the law of the signal given past data;
\emph{filtering densities}
\begin{equation}\label{filtering distr general}
\nu_{i|0: i}(x_{i}):=p(x_{i}| y_{0:i})\propto \int_{\X^{i}} p(x_{0:i}, y_{0:i}) \d x_{0:i-1}, 
\end{equation}
given past and present data, where $p(x_{0:i}, y_{0:i})$ is the joint density of signal and observations; 
and \emph{smoothing densities}
\begin{equation}\label{smoothing distr general}
 \nu_{i|0:T}(x_{i}):=p(x_{i}\mid y_{0:T}) \propto \int_{\X^{T}} p(x_{0:T}, y_{0:T}) \d x_{0:i-1}\d x_{i+1:T}
\end{equation}
which aim at improving previous estimates of the signal once the whole dataset is available. 
A further quantity of primary interest is the likelihood of the observations
\begin{equation}\label{likelihood}
 p(y_{0:T}) = \int_{\X^{T+1}} p(x_{0:T}, y_{0:T}) \d x_{0:T},
\end{equation}
obtained by marginalising out the trajectory of the signal from the joint density in the integral.

Computable inference, as introduced in \cite{chaleyat2006computable} (cf.~Introduction), requires to characterise the above distributions, for the model at hand, as finite mixtures of densities, possibly of different dimensions for different collection times.
The following section identifies a set of sufficient conditions that guarantee such characterisations are available.

\subsection{Sufficient conditions for computable inference}\label{sec: assumptions}

We identify a set of sufficient conditions under which \eqref{predictive general}, \eqref{filtering distr general} and \eqref{smoothing distr general} can be written as finite mixtures of densities belonging to a given parametric family, and \eqref{likelihood} as a finite product of mixtures. Assumptions 1-3 below are the same as in \cite{Papaspiliopoulos2014a}, who studied computable filtering, while Assumption 4 is needed for computable smoothing.

\smallskip
\begin{assumption}[Reversibility]\label{A: reversibility}
The signal $X_{t}$ is reversible with respect to $\pi$, i.e., the detailed balance condition $\pi(x) P_{\Delta}(x'\mid x) = \pi(x') P_{\Delta}(x\mid x')$ holds.
\end{assumption}
\smallskip

Here  $\pi$ need not be a normalised density. In fact, our results carry over to signals with a sigma-finite reversible measure, as long as the distributions in \Cref{A: conjugacy} below can be normalised.

Define now, for $K\ge1$, the space of positive integer vectors and associated norm
\begin{equation}
 \label{eq:M}
 \M :=\Z_{+}^K= \{\,\mm=(m_1,\ldots,m_K): m_j \in \Z_{+}\}, \quad 
 \norm{\mm} =\sum\nolimits_{i=1}^{K}m_{i},
\end{equation}
where $\oo$ denotes the vector of zeros and $\ee_i=(\delta_{ij})_{j\ge1}$ is the canonical unit vector in the $i$th direction. We write $\mm \leq \nn$ if and only if $m_j \leq n_j$ for all $j$, and $\mm-\nn$ for the vector with $j$th element $m_j-n_j$. We will also need the grid of integer vectors lying below points in $\MM \subset \M$, denoted by
\begin{equation}\label{below Lambda}
\B(\MM) = \{\nn: \nn \leq \mm\,, \mm \in \MM\},
\end{equation}
and with a little abuse of notation we also let $\B(\mm):=\B(\{\mm\})=\{\nn: \nn \leq \mm\}$.

The second assumption provides key requirements on the existence of a certain \emph{dual process} for the signal (a definition is given below). 
To this end, let, for $\Th \subseteq \R^l$, $l \in \Z_{+}$, the function $\r:\Th \to \Th$ be such that the differential equation 
\begin{equation}
 \label{ODE}
 \d \Theta_t / \d t = \r(\Theta_t),\quad \Theta_0 = \theta_0,
\end{equation}
has a unique solution, denoted $\Theta_{t}$, for all $\theta_0$. 
Let
$q: \Z_{+} \to \R_+$ be an increasing function, $\rho:\Th \to
\R_+$ be a continuous function, and let $(M_t,\Theta_t)$ be a two-component process on $\M\times \Th$, where $\Theta_t$ evolves autonomously according to \eqref{ODE}, and when
$(M_t,\Theta_t)=(\mm,\theta)$, the process jumps to
$(\mm-\ee_j,\theta)$ at rate $\rho(\theta) q(\norm{\mm} ) m_j$.
I.e., $M_{t}$ is a non-homogeneous pure-death process on $\M$, and $\Theta_{t}$ a deterministic process that modulates the jump rates of $M_{t}$. The transition probabilities of $M_{t}$, denoted
\begin{equation}\label{transition probabilities}
p_{\mm, \nn}(t; \theta):=\P(M_{t}=\nn\mid M_{0}=\mm,\Theta_{0}=\theta),\quad \nn\le \mm,
\end{equation} 
and zero otherwise, are fully described in the Supplementary Material. 

\smallskip
\begin{assumption}[Duality] \label{A: duality}
There exists a family of functions $h:\X \times \M \times \Th \to \R_+$ with $h(x,\oo,\theta_{0})=1$ for some $\theta_{0}\in \Th$ and $\sup_x
h(x,\mm,\theta) < \infty$ for all $(\mm,\theta)$, such that
\begin{equation}\label{duality identity}
 \E[h(X_t,\mm,\theta)\mid X_{0}=x] = \E[h(x,M_t,\Theta_t)\mid (M_0,\Theta_0)=(\mm,\theta)].
\end{equation}

\end{assumption}
\medskip

When \eqref{duality identity} holds, $(M_t,\Theta_t)$ is said to be \emph{dual} to $X_t$ with respect to functions $h$, called \emph{duality functions}. See \cite{jansen2014notion}.
The conditional expectations in \eqref{duality identity} are taken with respect to the law of $X_{t}$ on the left hand side, and with respect to that of $(M_t,\Theta_t)$ on the right hand side. 
When $K=0$ or $l=0$, the dual process is just $\Theta_t$ or $M_t$ respectively, and we adopt the convention that $\rho(\theta)\equiv1$ whenever $l=0$.

The third assumption essentially amounts to what in Bayesian statistics is known as \emph{conjugacy}.

\smallskip
\begin{assumption}[Conjugacy] \label{A: conjugacy}
For $\pi$ as in \Cref{A: reversibility} and $h$ as in \Cref{A: duality}, the emission density $f_{x_{t}}(\cdot)$ is conjugate to densities in the parametric family
\begin{equation}\label{family F}
\F=\{g(x,\mm,\theta)=h(x,\mm,\theta) \pi(x),\, \mm \in \M, \theta \in \Th\},
\end{equation} 
i.e., there exist an increasing function $t:\Y \times \M \to \M$ and a function $T: \Y \times \Th \to \Th$ such that if $X\sim g(x,\mm,\theta)$ and $Y\mid (X=x)\sim f_{x}(y)$, then $X\mid (Y=y)\sim g(x,t(y,\mm),T(y,\theta))$.
\end{assumption}
\smallskip

Here $g(x,\oo,\theta_{0})=h(x,\oo,\theta_{0})\pi(x)=\pi(x)$ is identified with the prior distribution (cf.~\Cref{A: duality}), and $g(x,\mm,\theta)$ is the posterior distribution of $X_{t}$ given $y\sim f_{x_{t}}(\cdot)$ if $t(y,\oo)=\mm$ and $T(y,\theta_{0})=\theta$. 

The fourth assumption is needed for obtaining the smoothing densities in computable form.

\smallskip
\begin{assumption}[$h$-stability]\label{A: h-stability}
For $h$ as in \Cref{A: duality}, there exist functions $d:\M^{2}\to\M$ and $e:\Th^{2}\to\Th$ such that for all $x\in\X,\ \mm,\mm' \in\M,\ \theta, \theta' \in \Th$
\begin{equation}\label{hh_stab}
 h(x, \mm, \theta)h(x, \mm', \theta') 
 =C_{\mm, \mm', \theta, \theta'} h(x, d(\mm, \mm'), e(\theta, \theta')),
\end{equation}
where $C_{\mm, \mm', \theta, \theta'}$ is constant in $x$.
\end{assumption}
\smallskip

At close inspection, \Cref{A: h-stability} may appear to follow from the conjugacy in \Cref{A: conjugacy}. In fact, when two observations $Y,Y'$ independently give  posteriors $ g(x, \mm, \theta)$ and $g(x, \mm', \theta')$, then $g(x, d(\mm, \mm'), e(\theta, \theta'))$ is simply the posterior obtained by using data $(Y,Y')$ jointly, and $C_{\mm, \mm', \theta, \theta'}$ simply provides a reparameterization. However, it appears not immediate that this argument is valid for any value of $\theta$, so we state it as an assumption.


\subsection{Recursive formulae}\label{sec: main results}

In this section we derive our theoretical results, and present general recursive formulae for updating the computable representations of 
\eqref{predictive general}-\eqref{filtering distr general}-\eqref{smoothing distr general}-\eqref{likelihood}.

\subsubsection{Filtering and prediction}\label{sec: filtering-prediction}

Denote by
\begin{equation}\label{marginals}
\begin{aligned}
\mu_{\mm,\theta}(y):=\int_{\X}f_{x}(y)g(x, \mm, \theta)\d x
\end{aligned}
\end{equation} 
the marginal density of $Y\sim f_{x}(\cdot)$ when $X\sim g(x, \mm, \theta)$, for $g$ in \eqref{family F}. For $\Theta_{t}$ as in \eqref{ODE}, $t(\cdot,\cdot),T(\cdot,\cdot)$ as in \Cref{A: conjugacy} and $i=0,\ldots,T$, define also the quantities
\begin{equation}
\begin{aligned}
\label{vartheta and M-sets}
\vartheta_{i|0:i}:=&\,T(y_{i}, \vartheta_{i|0:i-1}),
\quad \quad \ \
\vartheta_{i|0:i-1} := \Theta_{\Delta}(\vartheta_{i-1|0:i-1}),
\quad \quad 
\vartheta_{0|0:-1} := \theta_0\\
\MM_{i|0:i}:=&\,t(y_{i},\MM_{i|0:i-1}), \quad \quad
\MM_{i|0:i-1}:= \B(\MM_{i-1|0:i-1}),\quad \quad 
\MM_{0|0:-1} := \{\oo\}.
\end{aligned}
\end{equation} 
Here, $\vartheta_{i|0:i-1}$ denotes the state of the deterministic component of the dual process at time $i$, after the propagation from time $i-1$ and before updating with the datum collected at time $i$, and $\vartheta_{i|0:i}$ the state after such update. Similarly, $\MM_{i|0:i-1}$ denotes the set, at time $i$ before the update, of what are called here \emph{active indices}, i.e., the points in $\M$ identifying mixture components with strictly positive weight, and $\MM_{i|0:i}$ those active after the update. 
Note that the quantities in \eqref{vartheta and M-sets} are deterministic and can be computed on the basis of the dataset $y_{0:T}$, by means of the update functions $t,T$ in Assumption \ref{A: conjugacy}, of the solution of \eqref{ODE} and of \eqref{below Lambda}.

The following Theorem provides the recursive formulae for prediction and filtering in \eqref{predictive general} and \eqref{filtering distr general}.

\begin{theorem}
\label{prop:rec_filtering}
Let \Cref{A: reversibility} to \ref{A: conjugacy} hold, and let 
\begin{equation}\nonumber
\sum_{\mm \in \MM_{i-1|0:i-1}}w_\mm^{(i-1)}g(x, \mm, \vartheta_{i-1|0:i-1})
\end{equation} 
 be the density of $x_{i-1}$ conditional on $y_{0:i-1}$. Then  \eqref{predictive general} and \eqref{filtering distr general} are the finite mixtures of densities
\begin{align}
 \nu_{i|0:i-1}(x) 
 =&\, \sum_{\mm \in \MM_{i|0:i-1}}w_\mm^{(i-1)'}g(x, \mm, \vartheta_{i|0:i-1}),\label{prediction in thm}\\
 \nu_{i|0:i}(x)
=&\, \sum_{\mm \in \MM_{i|0:i}}w_\mm^{(i)} g(x, \mm, \vartheta_{i|0:i})
\label{filtering in thm}
\end{align}
where, for $p_{ \nn, \mm}(\Delta; \vartheta_{i|0:i})$ and $\mu_{\mm,\theta}$ as in \eqref{transition probabilities} and \eqref{marginals}, the mixture weights $ w_\mm^{(i-1)'},w_{\mm}^{(i)}$ are given by
\begin{equation}\label{weights of filtering recursion}
\begin{aligned}
 w_\mm^{(i-1)'} = &\, \sum_{\nn \in \MM_{i-1|0:i-1}:\ \nn \ge \mm}w_{\nn}^{(i-1)}p_{ \nn, \mm}(\Delta; \vartheta_{i-1|0:i-1}), \quad \mm \in \MM_{i|0:i-1},\\
w_{\mm}^{(i)} \propto &\, 
\mu_{\nn, \vartheta_{i|0:i-1}}(y_{i})w_{\nn}^{(i-1)'}, 
\quad \mm =t(y_{i},\nn),\nn\in \MM_{i|0:i-1},
\end{aligned}
\end{equation} 
and 0 elsewhere. 
\end{theorem}

This result is due to \cite{Papaspiliopoulos2014a} and included here in the present notation for ease of reference. Note that a predictive distribution for the signal at time $T+t$ for arbitrary $t>0$ can be easily obtained from \eqref{prediction in thm} by letting $\Delta=t$ in the transition probabilities $p_{ \nn, \mm}(\Delta; \vartheta_{i|0:i})$ (cf.~Supplementary Material).

\subsubsection{Likelihood}\label{sec: likelihood}

From \eqref{likelihood}, we can write the likelihood as
\begin{equation}
\begin{aligned}
p(y_{0:T}) 
=&\, \int_{\X^{T+1}} \nu(x_0)\prod_{i=0}^{T}f_{x_{i}}(y_{i})\prod_{i=1}^T P_{\Delta_{i}}(x_{i}\mid x_{i-1})\d x_{0:T}
\end{aligned}
\end{equation}
by integrating the signal trajectory out of $p(y_{0:T}\mid x_{0:T})$ and using the conditional independence of $y_{i}$ given $x_{i}$.
Alternatively, writing
\begin{equation}\label{eq:lik_markov} 
\begin{aligned}
p(y_{0:T}) =&\,p(y_{0})\prod_{i=1}^{T}p(y_{i}\mid y_{0:i-1})
=\int_{\X} f_{x_{0}}(y_{0})\nu(x_0)\prod_{i=1}^{T}\int_{\X}f_{x_{i}}(y_{i})p(x_{i}\mid y_{0:i-1}),
\end{aligned}
\end{equation} 
highlights the dependence on the predictive densities \eqref{predictive general}.

The following theorem shows that in the present setting, the above expressions are finite products of finite mixtures of distributions in \eqref{family F}, marginalised over the hidden state as in \eqref{marginals}. 

\begin{theorem}\label{thm: likelihood}
Under the assumptions of \Cref{prop:rec_filtering}, setting $\nu_{0}=\pi$, we have
 \begin{align}\label{full likelihood}
p(y_{0:T}) =\mu_{\oo,\theta_0}(y_0)\prod_{i = 1}^{T}\sum_{\nn \in \MM_{i|0:i-1}}w_\nn^{(i-1)'} \mu_{\nn, \vartheta_{i|0:i-1}}(y_{i})
\end{align}
with $\mu_{\nn,\vartheta_{i|0:i-1}}$ as in \eqref{marginals}, $\vartheta_{i|0:i-1},\MM_{i|0:i-1}$  as in \eqref{vartheta and M-sets} and $w_\nn^{(i-1)'}$ as in \eqref{weights of filtering recursion}.
\end{theorem}

The proof of \Cref{thm: likelihood} is provided in the Appendix. 

\subsubsection{Smoothing} \label{marginal_smoothing}

Let $0\le i\le T-1$. 
Bayes' Theorem and conditional independence allow to write  \eqref{smoothing distr general} as
\begin{equation}\label{smoothing_expression}
p(x_i\mid y_{0:T}) 
\propto p(y_{i+1:T}\mid x_i)p(x_i\mid y_{0:i})
\end{equation} 
where the right hand side involves the filtering distribution $ p(x_i\mid y_{0:i})$, available from \Cref{prop:rec_filtering}, and the  \emph{cost-to-go function} $p(y_{i+1:T}\mid x_i)$, sometimes called information filter, which is the likelihood of future observations given the signal. 

Denote by $\avt_{i}, \avt_{i}',\aMM_{i}, \aMM_{i}'$ the quantities defined in 
\eqref{vartheta and M-sets} computed backwards. Equivalently, these are computed as in \eqref{vartheta and M-sets} with data in reverse order, i.e.~using $y_{T:0}$ in place of $y_{0:T}$, namely
\begin{equation}\label{backward lambda e vartheta}
\begin{aligned}
\avt_{i|i+1:T}=&\,\Theta_{\Delta}(\avt_{i+1|i+1:T}), \quad \quad 
\avt_{i|i:T}=T(y_{i},\avt_{i|i+1:T}),\quad \quad 
\avt_{T|T}=T(y_{T},\theta_{0})\\
\aMM_{i|i+1:T}=&\,\B(\aMM_{i+1|i+1:T}), \quad \quad 
\aMM_{i|i:T}=t(y_{i},\aMM_{i|i+1:T}), \quad \quad 
\aMM_{T|T}=\{t(y_{T},\oo)\}.
\end{aligned}
\end{equation} 
The following proposition identifies an explicit expression for the cost-to-go function.

\begin{restatable}[]{proposition}{pfs}\label{prop: prediction for smoothing}
Let Assumptions \ref{A: reversibility}-\ref{A: duality} above hold. 
For all $0 \le i \le T-1$, we have
\begin{equation}\label{cost-to-go function}
 p(y_{i+1:T}\mid x_i) 
=
\sum_{\mm \in \aMM_{i|i+1:T} } \overleftarrow w^{(i+1)}_{\mm}h(x_{i}, \mm, \avt_{i|i+1:T})
\end{equation}
with 
\begin{equation}\label{weights of prediction for smoothing}
\overleftarrow w^{(i+1)}_{\mm}
=\sum_{\nn\in\aMM_{i+1|i+2:T}:\, t(y_{i+1}, \nn)\ge \mm}
\overleftarrow w_{\nn}^{(i+2)}
\mu_{\nn,\avt_{i+1|i+2:T}}(y_{i+1})
p_{t(y_{i+1},\nn),\mm}(\Delta; \avt_{i+1|i+1:T})
\end{equation} 
and $\avt_{i+1},\avt_{i|i+1:T},\aMM_{i+1|i+1:T},\aMM_{i|i+1:T}$ as in \eqref{backward lambda e vartheta}.
\end{restatable}
\noindent The proof of \Cref{prop: prediction for smoothing} is provided in the Appendix.

\vspace{1em}
An intuition on the above result can be obtained by considering that for $i=T-1$, $ p(y_{T}\mid x_{T-1}) $   simply averages over the possible values of the signal at time $T$, and computes the likelihood of $Y_{T}$ given such values. This would in general yield an infinite expansion based on the transition kernel of the signal. Here the duality relation in Assumption \ref{A: duality} allows to express this quantity as a finite linear combination of duality functions, which are in turn ratios of likelihoods over marginal likelihoods. 

The following Theorem shows that under the stated assumptions, the smoothing density \eqref{smoothing_expression} as well takes the form of a finite mixture of densities in \eqref{family F}.

\begin{theorem}\label{thm: smoothing}
Let \Cref{A: reversibility} to \ref{A: h-stability} above hold and let $\nu_{0}=\pi$. 
Then, for $0 \le i \le T-1$,
 \begin{equation}
 p(x_{i}\mid y_{0:T})=
\sum_{\mm \in \aMM_{i|i+1:T},\ \nn \in \MM_{i|0:i}}
w_{\mm,\nn}^{(i)}g(x_{i}, d(\mm,\nn), e(\avt_{i|i+1:T},\vartheta_{i|0:i})),
 \end{equation}
 with 
 \begin{equation}
w_{\mm,\nn}^{(i)}
\propto\overleftarrow w^{(i+1)}_{\mm} w_\nn^{(i)}C_{\mm, \nn, \avt_{i|i+1:T},\vartheta_{i|0:i}},
\end{equation} 
 $w_\nn^{(i)}$ as in \eqref{weights of filtering recursion}, $\overleftarrow w^{(i+1)}_{\mm}$ as in \eqref{weights of prediction for smoothing} and $C_{\mm, \nn, \avt_{i|i+1:T},\vartheta_{i|0:i}}$ as in \eqref{hh_stab}.
\end{theorem}
\noindent The proof of \Cref{thm: smoothing} is provided in the Appendix.

\vspace{1em}
Here the intuition is instead that the distribution of the signal at time $i$ is evaluated by appropriately ``interpolating'' its distribution at adjacent times $i-1,i+1$, given data $y_{0:i-1}$ and $y_{i+1:T}$ respectively. This interpolation is performed in principle using again the transition kernel of the signal, but it is again the duality of Assumption \ref{A: duality} that allows to reduce the resulting expression from a doubly infinite series to a double finite sum.
Here both $\MM_{i|0:i}$ and $\aMM_{i|i+1:T}$, which are sets of  active indices at time $i$, have finite cardinality, hence the smoothing densities can be computed recursively with a finite number of operations.

\subsection{Illustration}\label{sec: illustration}

\glsreset{CIR}
\glsreset{WF}

We illustrate the above results for two \glspl{HMM} of interest that fall in our setting: a one dimensional signal driven by a \gls{CIR} diffusion (which is also a continuous-state branching process), with Poisson distributed observations; and a signal driven by a \gls{WF} diffusion on the $(K-1)$-dimensional simplex with categorical observations.
Note that our results include also finite state space models and linear Gaussian systems (cf.~\citealp{Papaspiliopoulos2014a}), whose details for filtering and smoothing are well known and omitted here.

\subsubsection{Cox--Ingersoll--Ross signals}\label{sec: illustration_CIR}

The \gls{CIR} process 
is the solution of the stochastic differential equation
\begin{equation}\label{CIR SDE}
 \d X_{t} = (\delta\sigma^2 - 2\gamma X_{t})\d t + 2\sigma\sqrt{X_{t}}\d B_t, \quad \quad 
 \delta, \gamma, \sigma > 0,
\end{equation}
whose stationary distribution is the gamma density $\text{Ga}(\delta/2, \gamma/\sigma^2)$ with shape $\delta/2$ and rate $\gamma/\sigma^2$. 
A conjugate emission density $f_{x}(y)$ is the Poisson distribution $\text{Po}(\lambda x)$ with mean $\lambda x$. 
Choosing as prior $\pi(x)=g(x,0,\theta_{0})=\text{Ga}(x;\delta/2,\gamma/\sigma^2)$, with $\theta_0=\gamma / \sigma^{2}$, and letting $f_{x}(y)$ as above, the update of $g(x,0,\theta_{0})$ given one observation  $y$ yields $g(x,m,\theta)=g(x,t(y,0),T(y,\theta_{0}))=\text{Ga}(\delta/2+y,\gamma/\sigma^2+\lambda)$. 
Thus in this case
\begin{equation}\label{CIR h function}
\begin{aligned}
h(x, m, \theta)=\frac{\Gamma(\delta / 2)}{\Gamma(\delta / 2+m)}\left(\frac{\gamma}{\sigma^{2}}\right)^{-\delta / 2} \theta^{\delta / 2+m} x^{m} \exp \left\{-\left(\theta-\gamma / \sigma^{2}\right) x\right\}
\end{aligned}
\end{equation} 
together with $t(y,m)=m+y$, $T(y, \theta) =\theta + \lambda$. 
It can be easily verified that the marginals are Negative-Binomial densities, e.g.~for $y$ as above
\begin{equation}
 \mu_{m, \theta}(y) = 
 \NBin\left(y; \frac{\delta}{2}+m, \frac{\theta}{\theta+\lambda}\right).
\end{equation}
Furthermore, \eqref{CIR SDE} has dual given by a non-homogeneous death process on $\Z_{+}$ with transition probabilities 
 \begin{equation}\label{determin dual CIR}
 p_{m, m-i}(\Delta;\theta) = \text{Bin}(m-i; m, \Theta_{\Delta}(\theta)),\quad \quad 
 \Theta_{\Delta}(\theta) = \frac{(\gamma/\sigma^{2})\theta \mathrm{e}^{2 \gamma \Delta}}{\theta \mathrm{e}^{2 \gamma \Delta}+\gamma / \sigma^{2}-\theta},
\end{equation} 
where $\theta$ in \eqref{determin dual CIR} is the last available value of the gamma rate parameter. 

In this model, all filtering and predictive densities are finite mixtures of gamma densities. Specifically, let $w_{n}$ be the weights of the predictive  density at time $i$, given by the expression
\begin{equation}\label{CIR previous predictive}
\nu_{i|0:i-1}(x_{i})=\sum_{0\le n\le N_{i-1}}w_{n}\Ga(x;\delta/2+n,\vartheta_{i|0:i-1}), 
\quad \quad N_{i-1}=\sum_{j=0}^{i-1}y_{j}.
\end{equation} 
Here $\vartheta_{i|0:i-1}$ is obtained as in  \eqref{vartheta and M-sets} by recursively computing $\vartheta_{0|0}=T(y_{0},\gamma/\sigma^{2})$, $\vartheta_{1|0}=\Theta_{\Delta}(\vartheta_{0|0})$, $\vartheta_{1|0:1}=T(y_{1},\vartheta_{1|0})$, and so on.
The marginal density of a single observation $y_{i}$ is obtained by integrating the emission density $\text{Po}(\lambda x_{i})$ with respect to \eqref{CIR previous predictive}, yielding
\begin{equation}\label{CIR marginal}
\mu_{\nu_{i|0:i-1}}(y_{i})
= \int_{\X}\text{Po}(y_i; \lambda x_i)\nu_{i|0:i-1}(x_{i}) = \sum_{0\le n\le N_{i-1}}w_{n}
 \NBin\left(y_{i}; \frac{\delta}{2}+n, \frac{\vartheta_{i|0:i-1}}{\vartheta_{i|0:i}}\right),
\end{equation} 
with $N_{i-1}$ and $w_{n}$ as in \eqref{CIR previous predictive}. The density of all observations \eqref{full likelihood} is thus a product of  mixtures as in \eqref{CIR marginal}, with weights computed recursively as described in Theorem \ref{thm: likelihood}.
Upon observing $y_{i}$, the filtering density at time $i$ reads
\begin{equation}\label{CIR filtering}
\begin{aligned}
\nu_{i|0:i}(x_{i})
=&\,
\sum_{y_{i}\le m\le N_{i-1}+y_{i}}w_{m}^{(i)}\Ga(x_{i};\delta/2+m,\vartheta_{i|0:i})\\
w_{m}^{(i)} \propto &\, 
 w_{n}
  \NBin\left(y_{i}; \frac{\delta}{2}+n, \frac{\vartheta_{i|0:i-1}}{\vartheta_{i|0:i}}\right), \quad m=n+y_{i}, \quad 0\le n\le N_{i-1},
\end{aligned}
\end{equation} 
with $w_{n}$ given in \eqref{CIR previous predictive}. Then, the predictive density for $x_{i+1}$ given $y_{0:i}$ is 
\begin{equation}\label{CIR prediction}
\begin{aligned}
\nu_{i+1|0:i}(x_{i+1})=&\,\sum_{0\le n\le N_{i}}w_{n}^{(i)'}\Ga(x_{i+1};\delta/2+n,\vartheta_{i+1|0:i}), \quad \quad N_{i}=N_{i-1}+y_{i}\\
w_{n}^{(i)'} = &\, \sum_{n\le \ell\le N_{i}}w_{\ell}^{(i)}p_{ \ell,n}(\Delta;  \vartheta_{i|0:i}), 
\end{aligned}
\end{equation} 
with $p_{\ell,m}$  and  $\Theta_{\Delta}(\theta)$ as in \eqref{determin dual CIR} and  $w_{\ell}^{(i)}$ as in \eqref{CIR filtering}.

Additionally, \Cref{A: h-stability} is satisfied with $ d(m_1, m_2) =m_1 + m_2, 
 e(\theta_1, \theta_2) =\theta_1 + \theta_2 - \gamma/\sigma^2$ 
and
\begin{align}
 C_{m_1, m_2, \theta_1, \theta_2} =&\, \Gamma(\delta/2)\bigg(\frac{\gamma}{\sigma^{2}}\bigg)^{-\delta/2}\frac{\Gamma(\frac{\delta}{2}+m_1+m_2)}{\Gamma(\frac{\delta}{2}+m_1)\Gamma(\frac{\delta}{2}+m_2)}\frac{(\theta_1)^{\delta/2+m_1}(\theta_2)^{\delta/2+m_2}}{(\theta_1+\theta_2-\gamma/\sigma^2)^{\delta/2+m_1+m_2}}.
 \end{align}
 It is easy to show that, with $\theta_0 = \gamma/\sigma^2$, then $\theta_1,\theta_2,e(\theta_1, \theta_2)\ge\gamma/\sigma^2$.
Hence, the marginal smoothing density of $x_{i}$ given $y_{0:T}$ can be obtained by combining the filtering density $\nu_{i|0:i}(x_{i})$ and the cost-to-go function $p(y_{i+1:T}|x_{i})$. 
The latter is
\begin{equation}\label{CIR cost-to-go function}
\begin{aligned}
 p(y_{i+1:T}\mid x_i) 
=&\,
\sum_{0\le m \le M_{i+1}} \overleftarrow w^{(i+1)}_{m}h(x_{i}, m, \vartheta'), \quad \quad M_{i+1}=\sum_{j=i+1}^{T}y_{i},
\end{aligned}
\end{equation}
with $h$ as in \eqref{CIR h function} and $\vartheta'=\avt_{i|i+1:T}$ as in \eqref{backward lambda e vartheta}. 
The weights in \eqref{CIR cost-to-go function} are obtained from those of $p(y_{i+2:T}|x_{i+1})$, denoted $\overleftarrow w_{n}^{(i+2)}$, as
\begin{equation}\nonumber
\overleftarrow w^{(i+1)}_{m}
=
\sum_{n:\, n+y_{i+1}\ge m}
\overleftarrow w_{n}^{(i+2)}
 \NBin\left(y_{i+1}; \frac{\delta}{2}+n, \frac{\vartheta''}{\vartheta''+\lambda}\right)
\text{Bin}(m; n+y_{i+1}, \vartheta')
\end{equation} 
with $\vartheta''=\avt_{i+1|i+2:T}$. 
The smoothing density therefore is the finite mixture of Gamma densities
  \begin{equation}
\begin{aligned}
 p(x_{i}\mid y_{0:T})
 =&\,
 \sum_{y_i\le n\le N_{i}} 
\sum_{0\le m\le M_{i+1}}
w_{m,n}^{(i)}\Ga\bigg(x_{i}; \frac{\delta}{2}+m+n, \vartheta'+\vartheta-\gamma/\sigma^{2}\bigg)\\
\end{aligned}
 \end{equation}
with
 \begin{equation}
w_{m,n}^{(i)}\propto
\overleftarrow w^{(i+1)}_{m} w_{n}^{(i)}C_{m,\ell, \vartheta',\vartheta}
\end{equation} 
where $\overleftarrow w^{(i+1)}_{m}$ is as in \eqref{CIR cost-to-go function} and $w_{n}^{(i)}$ as in \eqref{CIR filtering}.


\subsubsection{Wright--Fisher signals}\label{sec: WF illustr}

The $K$-component \gls{WF} model is a diffusion process taking values in the simplex of nonnegative vectors $\xx=(x_{1},\ldots,x_{K})$ whose coordinates sum up to one (later simply called the simplex). 
It is characterised by its infinitesimal operator is 
\begin{equation}\label{WF operator}
 \mathcal{A} = \frac{1}{2}\sum_{i=1}^K(\alpha_i - |\aa| x_j)\frac{\partial}{\partial x_i} + \frac{1}{2}\sum_{i,j=1}^Kx_i(\delta_{ij}-x_j)\frac{\partial^2}{\partial x_i\partial x_j}, \quad 
 |\aa|=\sum_{j=1}^{K}\alpha_{j},
\end{equation}
where $\alpha_{j}>0$ for all $j$, and the domain of $\mathcal{A}$ can be taken to be the class of twice differentiable functions on the simplex. 
This diffusion is stationary and reversible with respect to the Dirichlet distribution whose density  on the simplex with respect to the Lebesgue measure is proportional to $\prod_{j=1}^{K}x_{j}^{\alpha_{j}}$. See \cite{ethier1986markov}.

A distribution conjugate to the Dirichlet, seen as prior density for a simplex-valued variable, is the Multinomial distribution, denoted here $\text{MN}(\yy;\norm{\yy},\xx)$, where $x_{j}$ is the probability of drawing category $j$ in a sample of size $\norm{\yy}$ and $\yy=(y_{1},\ldots,y_{K})$ are the multiplicities being drawn. 
Upon observing $\yy$, the Dirichlet density is updated by replacing $\aa$ with $\aa+\yy$, i.e., $\alpha_{j}$ with $\alpha_{j}+y_{j}$ for $j=1,\ldots,K$. 

Moreover, the \gls{WF} signal is known to be dual to a death process $D_{t}$ on $\Z_{+}^{K}$ that jumps from $\mm$ to $\mm-\ee_{j}$ at rate $m_{j}(|\aa|+|\mm|-1)/2$, with respect to functions 
\begin{equation}\label{hWF}
 h(\xx, \mm) = \frac{\Gamma(|\aa| +| \mm| )}{\Gamma(| \aa| )}\prod_{j=1}^K\frac{\Gamma(\alpha_j)}{\Gamma(\alpha_j+m_j)}x_{j}^{m_{j}}.
\end{equation}
This dual process has no deterministic component, and its transition probabilities $p_{\mm,\nn}(\Delta)$ are obtained by specialising Lemma S1.1 in the supplementary material to the case $\rho\equiv1$.
It can be easily verified that the marginal distribution of $\yy$ is a Dirichlet-Multinomial density, e.g., given parameter $\aa+\mm$, 
\begin{equation}\nonumber
\mu_{\mm}(\yy) = \DirMN(\yy;\aa+\mm)
:=\binom{\norm{\yy}}{\yy}\frac{\prod_{j=1}^K(\alpha_{j}+m_j)_{(y_{j})}}{(\norm{\aa+\mm})_{(\norm{\yy})}},
\end{equation} 
where $a_{(n)}=a(a+1)\cdots(a+n-1)$ is the Pochhammer symbol.
In this model, all filtering and predictive densities are finite mixtures of Dirichlet densities. Specifically, let the density of the signal at time $i$, before conditioning on $\yy_{i}$, be
\begin{equation}\label{WF previous predictive}
\nu_{i|0:i-1}(\xx_{i})=
\sum_{\oo\le \nn\le \textbf{N}_{i-1}}w_{\nn}
\Dir(\xx_{i};\aa+\nn), 
\quad \quad \textbf{N}_{i-1}=\sum_{j=0}^{i-1}\yy_{j}.
\end{equation} 
The marginal density of $\yy_{i}$ is obtained by integrating the Multinomial emission density with respect to \eqref{CIR previous predictive}, yielding
\begin{equation}\nonumber
\mu_{\nu_{i|0:i-1}}(\yy_{i})
= \int_{\X}\text{MN}(\yy_i; |\yy_i|, \xx)\nu_{i|0:i-1}(\xx_{i}) = \sum_{\oo\le \nn\le \textbf{N}_{i-1}}w_{\nn}
\DirMN(\yy;\aa+\nn),
\end{equation} 
with $\textbf{N}_{i-1}$ and $w_{\nn}$ as above, and the density of all observations \eqref{full likelihood} is thus a product of mixtures, with weights computed recursively.
Upon observing $\yy_{i}$, the filtering density at time $i$ reads
\begin{equation}\label{WF filtering}
\begin{aligned}
\nu_{i|0:i}(\xx_{i})=&\,\sum_{\yy_{i}\le \mm\le \textbf{N}_{i-1}+\yy_{i}}w_{\mm}^{(i)}
\Dir(\xx_{i};\aa+\mm)\\
w_{\mm}^{(i)} \propto &\, 
 w_{\nn}
\DirMN(\yy;\aa+\nn), \quad \mm=\nn+\yy_{i}, \quad \oo\le \nn\le \textbf{N}_{i-1},
\end{aligned}
\end{equation} 
and the predictive density for $\xx_{i+1}$ given $\yy_{0:i}$ is
\begin{equation}\label{WF predictive}
\begin{aligned}
\nu_{i+1|0:i}(\xx_{i+1})=&\,\sum_{\oo\le \nn\le \textbf{N}_{i}}w_{\nn}^{(i)'}
\Dir(\xx_{i+1};\aa+\nn), \quad \quad \textbf{N}_{i}=\textbf{N}_{i-1}+\yy_{i},\\
w_{\nn}^{(i)'} = &\, 
\sum_{\nn\le \ll\le \textbf{N}_{i}} w_{\ll}^{(i)}p_{\ll,\nn}(\Delta),
\end{aligned}
\end{equation} 
with $w_{\ll}^{(i)}$ as in \eqref{WF filtering}.
Additionally, \Cref{A: h-stability} for $h(\xx,\mm)$ is satisfied with $ d(\mm_1, \mm_2) =\mm_1 + \mm_2$ and
\begin{equation}\nonumber
 C_{\mm_1, \mm_2} = \frac{\Gamma(| \aa+\mm_1| )\Gamma(| \aa+\mm_2| )}{\Gamma(| \aa| )\Gamma(| \aa+\mm_1+\mm_2| )} \notag \prod_{j=1}^K\frac{\Gamma(\alpha_j)\Gamma(\alpha_j + m_{j1}+ m_{j2})}{\Gamma(\alpha_j + m_{j1})\Gamma(\alpha_j + m_{j2})},
\end{equation} 
and the cost-to-go function $p(\yy_{i+1:T}|\xx_{i})$ is
\begin{equation}\label{WF cost-to-go function}
\begin{aligned}
 p(\yy_{i+1:T}\mid \xx_i) 
=&\,
\sum_{\oo\le \mm \le \textbf{M}_{i+1}}
 \overleftarrow w^{(i+1)}_{\mm}h(\xx_{i}, \mm), \quad \quad 
 \textbf{M}_{i+1}=\sum_{j=i+1}^{T}\yy_{i},\\
\overleftarrow w^{(i+1)}_{\mm}
=&\,
\sum_{\nn:\, \nn+\yy_{i}\ge \mm}
\overleftarrow w_{\nn}^{(i+2)}
\textbf{DM}(\yy_{i+1};\aa+\nn)
p_{\nn+\yy_{i},\mm}(\Delta).
\end{aligned}
\end{equation}
Therefore the smoothing density is the finite mixture of Dirichlet densities
  \begin{equation}
 p(\xx_{i}\mid \yy_{0:T})=
 \sum_{\yy_i\le \nn\le \textbf{N}_{i-1}} \sum_{\oo\le \mm\le \textbf{M}_{i+1}}
w_{\mm,\nn}^{(i)}\Dir(\xx_{i}; \aa+\mm+\nn),
 \end{equation}
with
 \begin{equation}
w_{\mm,\nn}^{(i)}\propto
\overleftarrow w^{(i+1)}_{\mm} w_{\nn}^{(i)}C_{\mm,\nn}
\end{equation} 
where $\overleftarrow w^{(i+1)}_{m}$ is as in \eqref{WF cost-to-go function} and $w_{n}^{(i)}$ as in \eqref{WF filtering}.

%
%

\section{Simulation from the joint smoothing distribution}\label{sec:joint_inference}

The results presented in Section \ref{sec: results} can be elaborated to provide methods to simulate trajectories of the signal conditional on the entire dataset. These can be interpreted as samples from the posterior distribution of the signal, evaluated at the skeleton of observation times.
These draws in turn can be used to estimate the model parameters through a \gls{MCMC} strategy. 
More specifically, the availability of a method for sampling conditional trajectories opens the way to performing full inference on the model parameters, using a Metropolis-within-Gibbs algorithm, with a Gibbs
step for the trajectory and a Metropolis--Hastings step for the parameters. 
We implement such strategy in \Cref{subsec:inference_param}.

In this framework, two decompositions are in principle available for the joint conditional density of the signal. The \emph{backward} decomposition reads
\begin{equation}
\begin{aligned}
 p(x_{0:T}|y_{0:T}) 
 = &\, p(x_T|y_{0:T}) \prod_{i = 0}^{T-1}p(x_i|x_{i+1}, y_{0:T}) 
 =p(x_T|y_{0:T}) \prod_{i = 0}^{T-1}p(x_i|x_{i+1}, y_{0:i}) 
\end{aligned}
\end{equation} 
where the backward kernel, in virtue of Bayes' Theorem, can be written 
\begin{equation}\label{backward kernel}
p(x_i|x_{i+1}, y_{0:i}) 
=\frac
{p(x_{i+1}| x_{i})p(x_{i}|y_{0:i})}
{p(x_{i+1}|y_{0:i})}
=\frac{P_{\Delta}(x_{i+1}|x_{i})\nu_{i|0:i}(x_{i})}{\nu_{i+1|0:i}(x_{i+1})}
\end{equation} 
and $\nu_{i|0:i}$ and $\nu_{i+1|0:i}$ are the filtering and prediction densities as in \eqref{prediction in thm}-\eqref{filtering in thm}. 
The typical approach to smoothing using this decomposition is generally known as the \gls{FFBS} algorithm (cf.~\citealp{chopinIntroductionSequentialMonte2020,book:54642}),
which consists in first implementing a forward pass by filtering the whole signal, and then generating a draw of the conditional trajectory by means of a backward pass which starts from a sample from the last filtering distribution and uses \eqref{backward kernel} repeatedly.

An alternative approach uses the \emph{forward} decomposition 
\begin{equation}
\begin{aligned}
 p(x_{0:T}|y_{0:T}) 
 = &\, p(x_0|y_{0:T}) \prod_{i = i}^{T}p(x_i|x_{i-1}, y_{0:T}) 
 =p(x_0|y_{0:T}) \prod_{i = 0}^{T}p(x_i|x_{i-1}, y_{i:T}),
\end{aligned}
\end{equation} 
where the forward kernel, again by Bayes' Theorem, can be written 
\begin{equation}\label{forward kernel}
p(x_i|x_{i-1}, y_{i:T}) 
=\frac
{p(x_{i}| x_{i-1})p(y_{i:T}|x_i)}
{p(y_{i:T}|x_{i-1})}
=\frac{P_{\Delta}(x_{i}|x_{i-1})p(y_{i}|x_{i})p(y_{i+1:T}|x_{i})}{p(y_{i:T}|x_{i-1})}.
\end{equation} 
The cost-to-go functions appearing on the right hand side are available through  \Cref{prop: prediction for smoothing}.
Here the approach to obtaining a sample from the joint smoothing distribution is to compute the cost-to-go functions backward and to perform forward sampling.

In both cases, the finite mixture representations obtained in \Cref{sec: main results} (filtering density for the backward decomposition, cost-to-go functions for the forward decomposition) need to be combined with the transition density of the signal, so further analysis on how to simulate in pratice from $p(x_i|x_{i+1}, y_{0:i})$ 
or $p(x_i|x_{i-1}, y_{i:T})$ 
depends on the specific model. 
We show how this can be done for the \gls{CIR} and \gls{WF} model.


\subsection{Cox--Ingersoll--Ross signals}\label{sec: CIR simulation}

Under the specifications of Section \ref{sec: illustration_CIR}, the unconditional transition density of the signal can be written as the infinite mixture of gamma densities
\begin{equation}\label{eq:uncnd_tr_CIR}
\begin{aligned}
P_{\Delta}(x_{i+1}|x_{i})
=\sum_{k\ge0}\text{Po}(k; \Theta_{\Delta}'x_{i})\text{Ga}(x_{i+1}; \delta/2+k,e^{2\gamma\Delta}\Theta_{\Delta}'), \quad \quad 
\Theta_{\Delta}'=\frac{\gamma/\sigma^{2}}{e^{2 \gamma \Delta}-1},
\end{aligned}
\end{equation} 
which implies, using \eqref{backward kernel}, that
\begin{equation}
\begin{aligned}
p(x_{i}|x_{i+1}, y_{0:i}) 
=&\,
\frac{\sum_{k\ge0}\text{Ga}(x_{i+1}; \delta/2+k,e^{2\gamma\Delta}\Theta_{\Delta}')\text{Po}(k; \Theta_{\Delta}'x_{i})\nu_{i|0:i}(x_{i})}{\nu_{i+1|0:i}(x_{i+1})}.
\end{aligned}
\end{equation} 
When $\nu_{i|0:i}(x_{i})$ is as in \eqref{CIR filtering}, the terms in $x_{i}$ yield
\begin{equation}\nonumber
\begin{aligned}
\text{Po}(k; \Theta_{\Delta}'x_{i})\nu_{i|0:i}(x_{i})
=&\,
\sum_{y_{i}\le m\le N_i}
w_{m}^{(i)} \mu_{m, \vartheta_{i|0:i}}(k)
\Ga(x_{i};\delta/2+m+k,\vartheta_{i|0:i}+\Theta_{\Delta}')
\end{aligned}
\end{equation} 
where we have used Bayes' Theorem to exploit the conjugacy of the Gamma--Poisson model (cf.~Section \ref{sec: illustration_CIR}) and where
\begin{equation}\nonumber
\mu_{m,\vartheta_{i|0:i}}(k)
=\text{NB}\bigg(k; \frac{\delta}{2}+m,\frac{\vartheta_{i|0:i}}{\vartheta_{i|0:i}+\Theta_{\Delta}'}\bigg)
\end{equation} 
is the marginal distribution of $k$ when $k|x_{i}\sim\text{Po}(k;\Theta_{\Delta}'x_{i})$ and $x_{i}\sim \Ga(x_{i};\delta/2+m,\vartheta_{i|0:i})$.
It follows that
\begin{equation}
\label{eq:joint_smoothing_CIR}
p(x_{i}|x_{i+1}, y_{0:i}) 
=\sum_{k\ge0}
\tilde w_{k}(x_{i+1},y_{0:i})
\sum_{y_{i}\le m\le N_i}
\hat w_{m,k}^{(i)}
\Ga(x_{i};\delta/2+m+k,\vartheta_{i|0:i}+\Theta_{\Delta}')
\end{equation} 
where
\begin{align}
\tilde w_{k}(x_{i+1},y_{0:i})
=&\,
\frac{\text{Ga}(x_{i+1}; \delta/2+k,e^{2\gamma\Delta}\Theta_{\Delta}')}{\nu_{i+1|0:i}(x_{i+1})}, \quad \quad 
\hat w_{m,k}^{(i)}
= w_{m}^{(i)} \mu_{m, \vartheta_{i|0:i}}(k),
\end{align}
and $\nu_{i+1|0:i}(x_{i+1})$ is as in \eqref{CIR prediction}.

\Cref{eq:joint_smoothing_CIR} is again an infinite mixture of gamma densities, implying that a simple algorithm for simulating from $p(x_{i}|x_{i+1}, y_{0:i}) $ is to sample the double index $(k, m)$ from the discrete mixture with probability masses equal to the weights in \eqref{eq:joint_smoothing_CIR}, and then sample $x_i$ from a gamma distribution whose parameters are determined by the drawn $(k, m)$. A schematic version of this strategy is as follows: 
\begin{itemize}\label{CIR scheme}
\item draw $U\sim \text{U}(0,1)$
\item set $\kappa, M$ to be the minimum values $j,p$ that yield 
$$ U\le  \sum_{k\le j-1}
\tilde w_{k}(x_{i+1},y_{0:i})\sum_{y_{i}\le m\le N_i}
\hat w_{m,k}^{(i)} + \tilde w_{j}(x_{i+1},y_{0:i})\sum_{y_{i}\le m\le p}
\hat w_{m,j}^{(i)}$$
\item draw $X_{i}\sim \Ga(x_{i};\delta/2+\kappa+M,\theta+\lambda+\Theta_{\Delta}')$.
\end{itemize}

\subsection{Wright--Fisher signals}\label{sec: WF simulation}

A similar strategy could in principle be applied to the \gls{WF} model as well, under the specifications of \Cref{sec: WF illustr}. The unconditional transition density of the \gls{WF} diffusion can be written (cf., e.g., \citealp{ethierTransitionFunctionFlemingViot1993})
\begin{equation}\label{eq: WF transition dens}
P_{\Delta}(\xx_{i}|\xx_{i-1})=\sum_{m=0}^{\infty} q_{m} \sum_{\ll \in \mathbb{N}^{K}:\ l_{1}+\ldots+l_{K}=m} \operatorname{Multinom}\left(\ll | m, \xx_{i-1}\right) \operatorname{Dir}\left(\xx_{i} | \ll+\xx_{i-1}\right).
\end{equation}
However, the weights $q_{m}$, determined in \cite{griffithsLinesDescentDiffusion1980, tavareLineofdescentGenealogicalProcesses1984}, have an infinite expansion, resulting in an intractable doubly infinite series for the transition density. 
Recently, \cite{jenkins2017exact} devised a strategy for sampling exactly from \eqref{eq: WF transition dens}, by means of an algorithm that deals with alternated weights that decrease only after a certain index.
Leveraging on their results, we are able to obtain an algorithm for sampling from the conditional kernel. First we provide the following representation.

\begin{theorem}\label{thm: WF_joint_smoothing}
Let $\xx_{0:T}$ be as in \Cref{sec: WF illustr}, and let $\yy_{0:T}$ come from a multinomial emission density.
Then \eqref{forward kernel} can be written as the mixture of Dirichlet densities
\begin{equation}\label{eq: WF_forward_kernel}
p(\xx_{i}|\xx_{i-1},\yy_{i:T}) = \sum_{m=0}^\infty \tilde{q}_m \sum_{\ll \in \N^K:\  l_1+\ldots+l_K = m} \tilde  w_{\ll}^m \sum_{\kk \in \aMM_{i|i+1:T}}\tilde w_{\ll, \kk}\mathrm{Dir}(\xx_i|\ll + \xx_{i-1}+\yy_i+\kk).
\end{equation}
\end{theorem}
Explicit expressions for the weights $\tilde{q}_m, \tilde  w_{\ll}^m$ and $\tilde w_{\ll, \kk}$ in \eqref{eq: WF_forward_kernel}, together with a proof of the statement, are provided in the Appendix. 
The two inner sums in the above representation are finite mixtures from which sampling is straightforward. 
The final step is to sample from the infinite mixture with weights $\tilde q_m$.
The strategy proposed in \cite{jenkins2017exact} can be adapted to this aim, upon observing that we can bracket the discrete density of $m$ by two convergent series and use the strategy for dealing with alternated series from \cite{Devroye1986a}. 
A detailed explanation and an explicit sampling algorithm are provided in the appendix.
Note that this approach also underlies the foundational work in \cite{beskosRetrospectiveExactSimulation2006} for simulation of diffusion processes.

\section{Implementation and acceleration}\label{sec: application}

\subsection{Computational challenges}

The expressions we have obtained in Section \ref{sec: main results} are similar in spirit to the well-known Baum-Welch algorithms (cf.~Introduction), with the difference that in our case the finite state space is a subset of $\Z_{+}^{K}$ of varying dimension instead of fixed.
In practice, each component in the involved mixture distributions is associated to a node of the grid in $\Z_{+}^{K}$ given by the current set of active indices. 
As already noted by \cite{chaleyat2006computable}, each update operation with new data increases the nominal number of components in the filtering mixture distribution by shifting the mixture to a  higher set of nodes, where the upward shift represents the accumulation of further information, and associating null weights to all nodes below the shifted set of active indices. 
All these nodes then become active indices upon the prediction operation, thus increasing the effective number of mixture components (those with non null weight). 
A similar intuition applies to the evaluation of the cost-to-go functions, which are combined with the filtering distributions to form the marginal smoothing distributions.

To be more specific, for the two models illustrated in Section \ref{sec: illustration}, the number of mixture components in the filtering distributions evolves as $| \MM_n| = \prod_{k=0}^K (1 + \sum_{i=1}^n y_{i,k})$, where $y_{i, k}$ is the $k$-th coordinate of $\yy_{i}$ observed at time $i$ and $K$ is the dimension of the signal space, hence the polynomial complexity mentioned in the Introduction.
On the implementation side, an important aspect to note is that the prediction step is more expensive than the update step, as at each iteration it involves computing the probability that the dual process reaches any point $\mm$ in $\B(\MM_{i|0:i})$  from any point in $\MM_{i|0:i}$ which is not lower than $\mm$ (cf.~\eqref{below Lambda},\eqref{vartheta and M-sets} and \Cref{prop:rec_filtering}).
When new data arrive at a constant frequency, it is possible to limit the cost of the prediction operation by storing the transition terms $p_{\mm, \nn}(\Delta)$, which can be used multiple times during the successive iterations. 
Nonetheless, the rapid growth in their number 
renders this strategy daunting in terms of memory storage.
A close inspection reveals that the $p_{\mm, \nn}(\Delta)$ are themselves a product of a smaller number of terms (cf.~Lemma 
\textcolor{blue}{S1.1} 
in the supplementary material). 
This number grows only quadratically with the sum of all observations and can be saved, thus improving the computational efficiency.

A major point that allows to substantially improve the computation is however given by the fact that each transition of the death process will assign a non negligible probability weight only to a small number of nodes, the intuition being that as the time interval becomes large, the probability mass progressively concentrates on the grid origin $(0,\ldots,0)\in \Z_{+}^{K}$. 
Cf.~Figure 4 in \cite{ascolaniPredictiveInferenceFleming2020}. 
As a result, the number of mixture components in the filtering distributions with non negligible weight may be roughly stationary or increase at a much lower rate, depending on the ratio between the time lag between observations and the number of observations per collection time. 
Cf.~\Cref{fig:saturation} here and Section 4 in \cite{Papaspiliopoulos2014a}. 
A similar behaviour is expected from the cost-to-go functions, which are a linear combination of functions with many negligible coefficients relative to the largest. 
This suggest that a great computational improvement can be obtained by informedly pruning the mixtures of those components having negligible weight, which we develop in \Cref{sec:pruning}.

Another hurdle in the computation, of a more technical flavour, is due to the alternating signs appearing in the death process transition probabilities (cf.~Lemma 
\textcolor{blue}{S1.1} 
in the supplementary material), susceptible to both over and underflow.
Here we solve this challenge to numerical stability using the \texttt{Nemo.jl} library for arbitrary precision computation \citep{nemo}. 


\subsection{Pruning}\label{sec:pruning}

Based on the intuitions and experimental observations illustrated in the previous section, 
we consider three pruning strategies:
\begin{enumerate}
\item \emph{Fixed number}: retain only a fixed number of components after raking them by weight. This aims at controlling the computational budget at each iteration. 
\item \emph{Fixed mass}: retain the minimum number of components needed to reach a fixed fraction of the total mass.
This aims at controlling the total approximation error incurred by the pruning, and in fact is closely linked to the total variation distance between the exact and the pruned mixture. 

\item \emph{Fixed threshold}: retain only the components whose weight is above a fixed threshold. 
This approach has the advantage of not requiring to rank the mixture components before pruning. 
However, it not obvious how to choose a threshold for the cost-to-go functions, whose coefficients do not sum up to one.

\end{enumerate}

The pruning operation should be performed at each time step, followed by a renormalisation of the remaining weights (in the case of the filtering distributions), so the overall approximation error is the result of several previous approximations. 
However, the error accumulation is counterbalanced by the incorporation of fresh information on the data generating distribution at each collection time. 
In fact, the pruning should be performed after the update with the incoming data, which concentrates the mass of the mixture closer to the true data generating mechanism, thus likely on a fewer number of components. 
This has the further advantage of reducing the computational cost of the prediction step, which is more expensive than the update. 
Hence pruning after the update entails the maximal computational gain as it retains the required mass through the minimal number of components and reduces the number of transitions to be computed. 
Note that since the likelihood is obtained by means of the filtering weights (cf.~Section \ref{sec: likelihood}), the above strategy essentially applies to the computation of the likelihood as well.

\begin{algorithm}[t]
\SetAlgoLined

\textbf{Pruning setting}: ON (approximate filtering) / OFF (exact filtering)

\KwIn{$Y_{0:n}$, $t_{0:n}$ and $\nu = h(x, \oo, \theta_0) \in \mathcal{F}$ for some $\theta_0 \in \Th$}

\KwResult{$\vartheta_{i|0:i}$, $\MM_{i|0:i}$ and $W_{0:n}$ with $W_i = \{w_\mm^i, \mm \in \MM_{i|0:i}\}$ }

\SetKwBlock{Begin}{Initialise}{}

\Begin{

Set $\vartheta_{0|0} = T(Y_0, \theta_0)$ with $T$ as in \eqref{vartheta and M-sets}\\

Set $\MM_{0|0} = \{t(Y_0, \oo)\} = \{\mm^*\}$ and $W_0 = \{1\}$ with $t$ as in \Cref{A: conjugacy}\\

Compute $\vartheta_{1|0}$ from $\vartheta_{0|0}$ as in \eqref{vartheta and M-sets}\\

Set $\MM^* = \B(\MM_{0|0})$ and $W^* = \{p_{\mm^*, \nn}(\Delta, \vartheta_{0|0}), \nn \in \MM^*\}$ with $\B$ as in \eqref{below Lambda} and $p_{\mm, \nn}$ as in (\textcolor{blue}{S1})

}

\For{$i$ from $1$ to $n$}{

\SetKwBlock{Begin}{Update}{}

\Begin{

Set $\vartheta_{i|0:i} = T(Y_i, \vartheta_{i|0:i-1})$\\

Set $W_{i} = \{\frac{w_\mm^* \mu_{\mm, \vartheta_{i|0:i-1}}(Y_i)}{\sum_{\nn \in \MM^*}w_\nn^* \mu_{\nn, \vartheta_{i}}(Y_i)}, \mm \in \MM^*\}$ with $\mu_{\mm, \theta}$ defined as in \eqref{marginals}

Set $\MM_{i|0:i} = \{t(Y_i, \mm), \mm \in \MM^*\}$ and update the labels in $W_i$\\
Copy $\vartheta_{i|0:i}$, $\MM_{i|0:i}$ and $W_{i}$ to be reported as the output
}

\uIf{pruning \emph{ON}}{

Prune $\MM_{i|0:i}$ and remove the corresponding weights in $W_i$\\

Normalise the weights in $W_i$

}

\SetKwBlock{Begin}{Predict}{}

\Begin{

Compute $\vartheta_{i+1|0:i}$ from $\vartheta_{i|0:i}$\\

Set $\MM^* = \B(\MM_{i|0:i})$ and $W^* = \left\{\displaystyle{\sum_{\mm \in \MM_{i|0:i}, \mm \ge \nn}}w_{\mm}^ip_{\mm, \nn}(\Delta,\vartheta_{i|0:i}), \nn \in \MM^*\right\}$
}
}
\begin{quote}
\caption{\small Filtering\label{algo_filtering}}
\end{quote}
\end{algorithm}

We lay out two practical algorithms for implementing the recursions of Section \ref{sec: main results}, both with the option of pruning. If kept off, the algorithms thus provide an exact evaluation of the respective distributions. The pseudo code for computing the filtering densities is provided in Algorithm \ref{algo_filtering}, while that for the smoothing densities in Algorithm \ref{algo_smoothing}. 
Algorithm S1 in the Supplementary Material illustrates how to modify Algorithm \ref{algo_filtering} to compute the likelihood.

\begin{algorithm}[h!]
\SetAlgoLined

\textbf{Pruning setting}: ON (approximate smoothing) / OFF (exact smoothing)

\KwIn{$Y_{0:n}$, $t_{0:n}$ and $\nu = h(x, \oo, \theta_0) \in \mathcal{F}$ for some $\theta_0 \in \Th$}

\KwResult{$\vartheta_{i|0:n}$, $\MM_{i|0:n}$ and $W_{0:n|0:n}$ with $W_{i|0:n} = \{w_\mm^i, \mm \in \MM_{i|0:n}\}$ }

\SetKwBlock{Begin}{Initialise}{}

\Begin{

Obtain $\vartheta_{i|0:i}$, $\MM_{i|0:i}$ and $W_{0:n}$ with $W_i = \{w_\mm^i, \mm \in \MM_{i|0:i}\}$ from the filtering algorithm, with pruning if relevant.

Set $\overleftarrow\vartheta_{n|n} = T(Y_n, \theta_0)$ with $T$ as in \eqref{vartheta and M-sets}\\

Set $\overleftarrow\MM_{n|n} = \{t(Y_n, \oo)\} = \{\mm^*\}$ and $\overleftarrow W_n = \{1\}$ with $t$ as in \Cref{A: conjugacy}\\

Compute $\overleftarrow\vartheta_{n-1|n}$ from $\overleftarrow\vartheta_{n|n}$ as in \eqref{vartheta and M-sets}\\

Set $\overleftarrow\MM^* = \B(\overleftarrow\MM_{n|n})$ and $\overleftarrow W^* = \{p_{\mm^*, \nn}(\Delta, \overleftarrow\vartheta_{n|n}), \nn \in \overleftarrow\MM^*\}$ with $\B$ as in \eqref{below Lambda} and $p_{\mm, \nn}$ as in (\textcolor{blue}{S1}) 
}

\For{$i$ from $n-1$ to $1$}{

\SetKwBlock{Begin}{Update}{}

\Begin{

Set $\overleftarrow\vartheta_{i|i:n} = T(Y_i, \vartheta_{i|i+1:n})$\\

Set $\overleftarrow W_{i} = \{w_\mm^* \mu_{\mm, \overleftarrow\vartheta_{i|i+1:T}(Y_i)}, \mm \in \overleftarrow\MM^*\}$ with $\mu_{\mm, \theta}$ defined as in \eqref{marginals}

Set $\overleftarrow\MM_{i|i:n} = \{t(Y_i, \mm), \mm \in \MM^*\}$\\

}

\uIf{pruning \emph{ON}}{

Prune $\overleftarrow\MM_{i|i:n}$ and remove the corresponding values in $\overleftarrow W_i$\\

}

\SetKwBlock{Begin}{Predict}{}

\Begin{

Compute $\overleftarrow \vartheta_{i-1|i:n}$ from $\overleftarrow\vartheta_{i|i:n}$\\

Set $\overleftarrow \MM^* = \B(\overleftarrow\MM_{i|i:n})$ and $\overleftarrow W^* = \left\{\displaystyle{\sum_{\mm \in \overleftarrow\MM_{i|i:n}, \mm \ge \nn}}w_{\mm}^ip_{\mm, \nn}(\Delta,\overleftarrow\vartheta_{i|i:n}), \nn \in \overleftarrow\MM^*\right\}$
}
}
\For{$i$ from $1$ to $n-1$}{
Set $\vartheta_{i|0:n} = e(\overleftarrow\vartheta_{i|i+1:n}, \vartheta_{i|0:i})$

Set $\MM_{i|0:n} = \{d(\mm, \nn); \mm \in \overleftarrow\MM_{i|i+1:n}, \nn \in \MM_{i|0:i}\}$ considering only the unique values.

Set $\displaystyle{W_{i|0:n} = \{w_{\ll}^i \propto \sum_{\mm \in \overleftarrow\MM_{i|i+1:n}, \nn \in \MM_{i|0:i}, d(\mm, \nn) = \ll}\overleftarrow w_{\mm}^{(i+1)}w_{\nn}^{(i)}C_{\mm, \nn, \overleftarrow\vartheta_{i|i+1:n}, \vartheta_{i|0:i}}, \ll \in \MM_{i|0:n}\}}$
}
\begin{quote}
\caption{\small Smoothing\label{algo_smoothing}}
\end{quote}
\end{algorithm}



\section{Numerical experiments}\label{sec: numerical experiments}


\subsection{Retrieving the signal}\label{subsec:inference}

We illustrate inference on the signal trajectory for the models illustrated in \Cref{sec: illustration}. 
We reparametrise the \gls{CIR} model by considering
\begin{equation}\label{CIR reparameterised}
 \d X_t = a(b-X_t)\, \d t + s\sqrt{X_t}\, \d B_t
\end{equation}
in place of \eqref{CIR SDE}, 
where $b$ is the mean, $a$ the speed of adjustment towards the mean and $s$ controls the volatility, which respectively correspond to $   a = 2\gamma, b = \delta\sigma^2/(2\gamma)$ and $s = 2\sigma$.
We simulate the trajectory of a \gls{CIR} process starting from $X=3$ with $a = 5$, $b = 9.6$ and $s = 8$, corresponding to a  $\text{Gamma}(1.5,0.15625)$ stationary distribution. 
We draw 10 observations at each of 200 collection times separated by 0.011 seconds.

For the \gls{WF}, we implement\footnote{Our code is available at \url{https://github.com/konkam/ExactWrightFisher.jl}.} the exact simulation scheme of \cite{jenkins2017exact} to simulate a signal with $K=3$ components, initialising the process at random from a Dirichlet($0.3, 0.3, 0.3$) distribution, and draw 15 observations at each of 10 observation times separated by 0.1 seconds.

\Cref{algo_works} shows the \emph{exact} filtering for both models, obtained following the parameter updates detailed in \Cref{sec: illustration} by means of Algorithm \ref{algo_filtering} (with pruning off). The non observed trajectories of the signals are correctly recovered by the filter.

\begin{figure}[t!]
\begin{center}
\includegraphics[width = .7\textwidth]{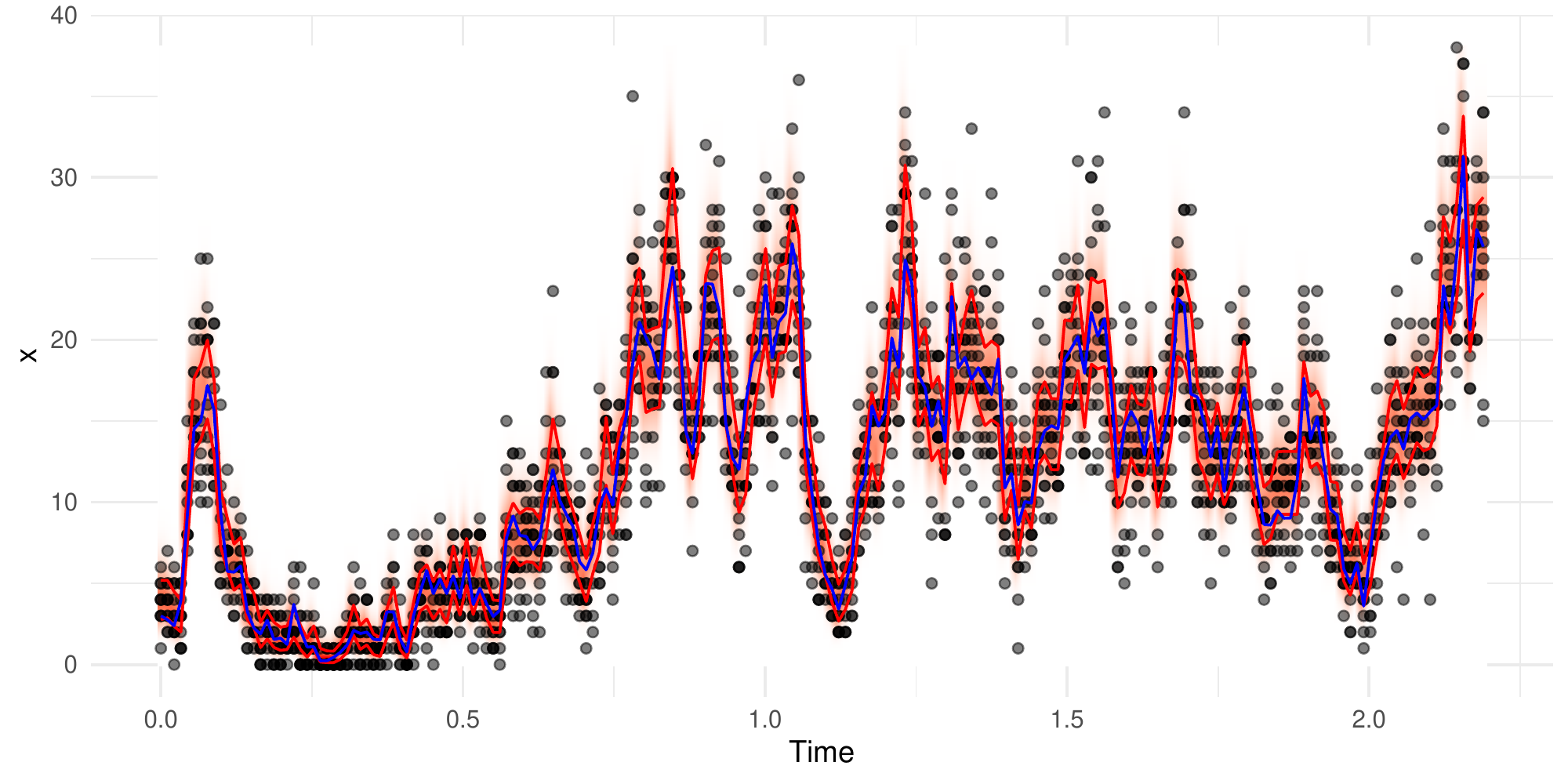}\\
\includegraphics[width = .7\textwidth]{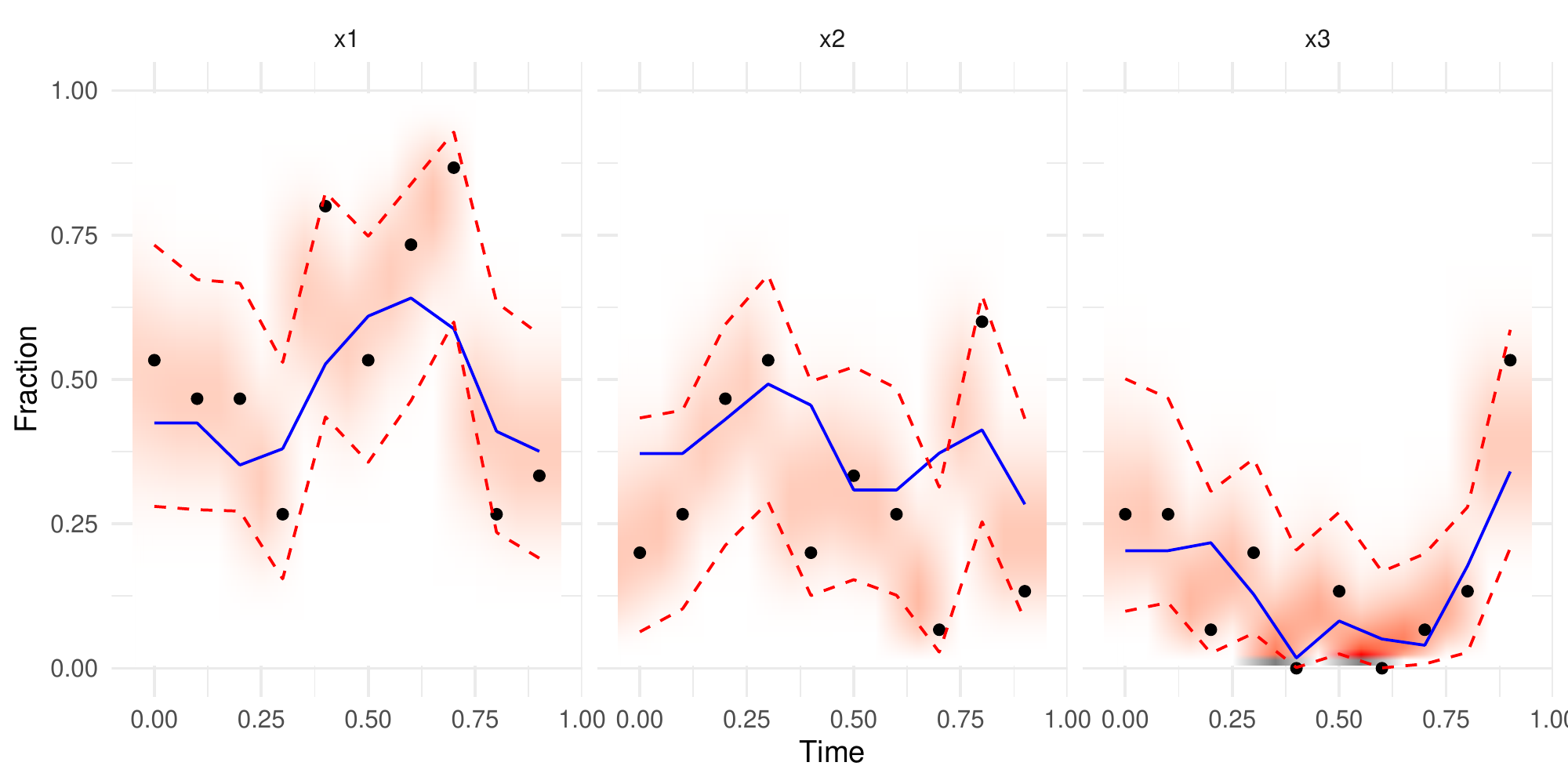}
\begin{quote}
\caption{ Hidden signal (blue solid), data points (bullets) and 95\% pointwise credible intervals (red dashed) derived from the filtering distribution for the \gls{CIR} (top) and \gls{WF} (bottom) simulated datasets.
A heat map also represents the filtering densities, darker red indicating higher density.
For the \gls{WF}, each panel plots the marginal for one coordinate, and the single bullet is the proportion of the observations from the corresponding type.
\label{algo_works}}
\end{quote}
\end{center}
\end{figure}

\Cref{joint_smoothing_algo_works} shows a few trajectories sampled from the joint smoothing density of the CIR model, i.e., the conditional distribution of the signal given the data. This is performed by following the strategy outlined in \Cref{sec:joint_inference} and exploits the correlation with neighbouring time points, thus improving the estimates obtained by means of the marginal smoothing distributions only (\Cref{marginal_smoothing}).

\begin{figure}[t!]
\begin{center}
\includegraphics[width=0.8\textwidth]{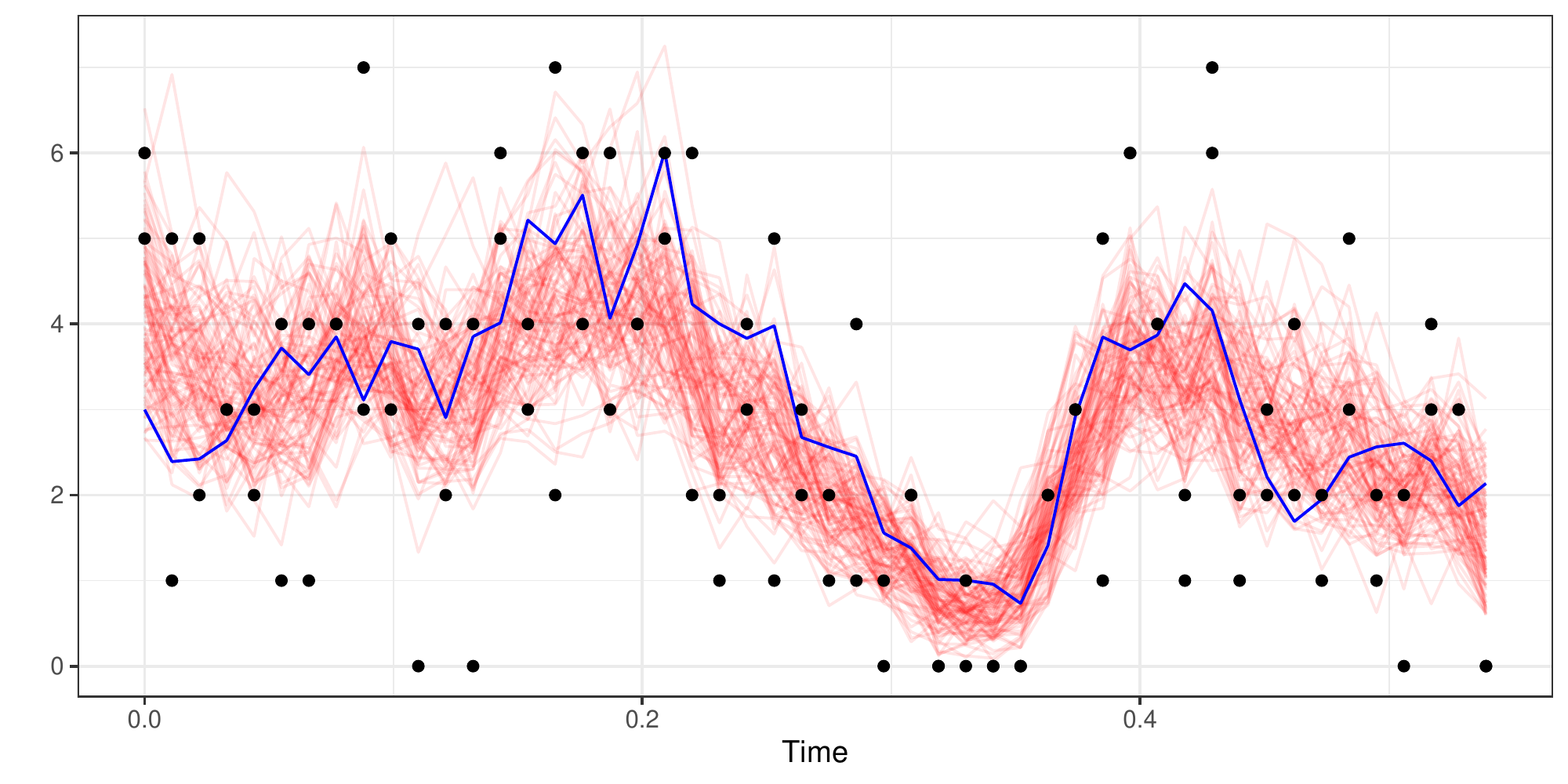}
\begin{quote}
 \caption{ True trajectory (blue) and a few trajectories  sampled from the joint smoothing distribution (red) of the \gls{CIR} model, together with the data (dots). 
 \label{joint_smoothing_algo_works}
 }
\end{quote}
\end{center}
\end{figure}


\subsection{Performance}

We assess the performance of the pruning strategies proposed in \Cref{sec:pruning} applied to the two above models both from a visual and a quantitative perspective. 
\Cref{plot_approximate_likelihood} illustrates the effect of approximating the conditional likelihood, based on the same simulated scenario of the previous section. 
We test the most drastic approximation which retains a single component. 
This degrades the smoothness of the likelihood and perturbs the concavity of the curves, which in turn renders the optimisation problem challenging.
A less extreme approximation, which retains the 10 largest components in the filtering distribution, provide seemingly concave curves whose optima are very close to the exact maximum likelihood estimates.

\begin{figure}[t!]
\begin{center}
 \includegraphics[width=.8\textwidth]{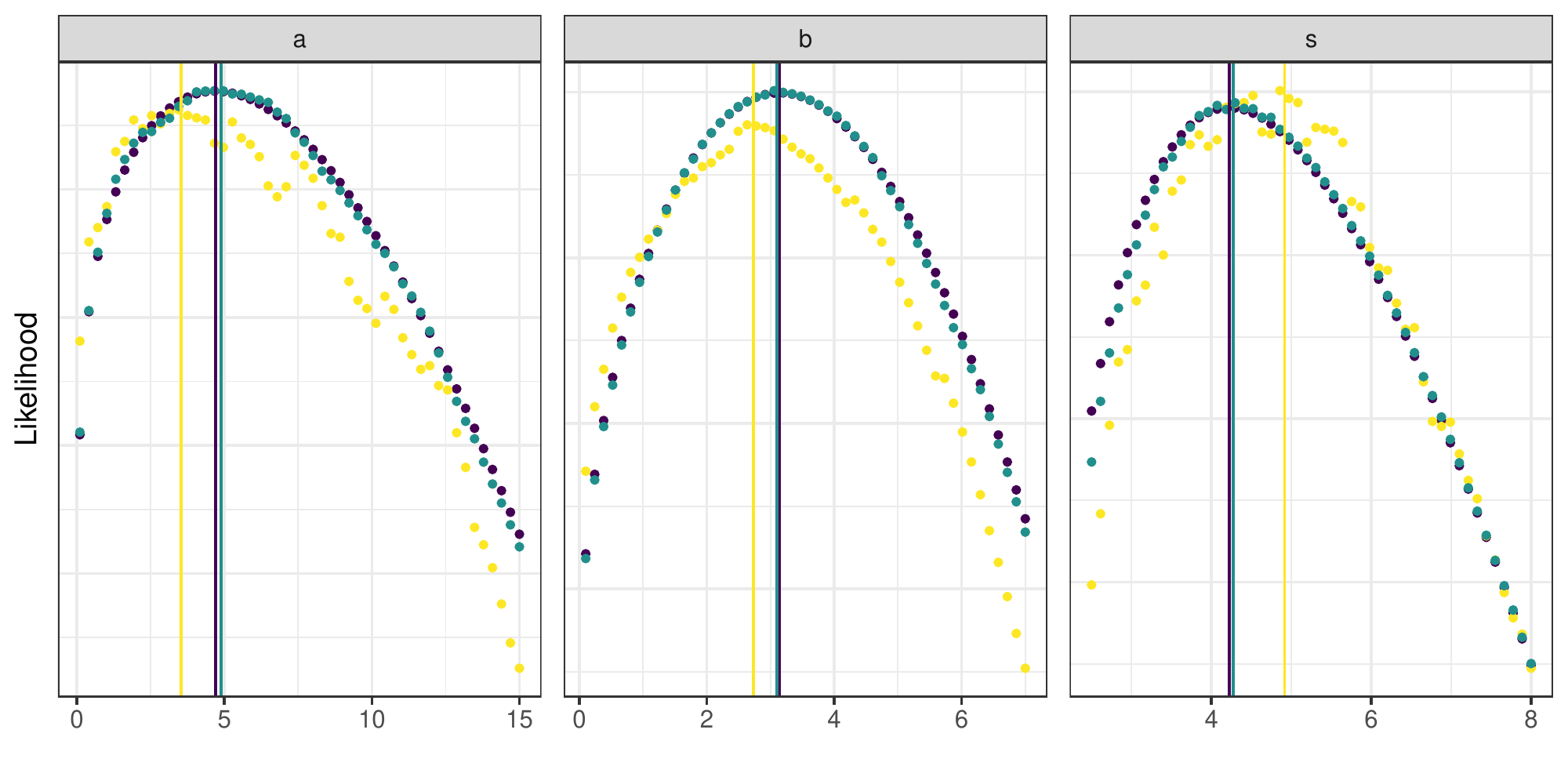}
 \begin{quote}
\caption{ Comparison between exact and approximate likelihood  obtained via pruning, for parameters $a,b,s$ in \eqref{CIR reparameterised}.
The dots represent the conditional likelihood for one parameter, computed on a grid and fixing the other parameters at their true value. 
The vertical lines represent the maximum likelihood estimate obtained by maximising these conditional likelihoods (paired colours). 
The darkest colour represents the exact likelihood, the intermediate colour represent an approximate likelihood obtained by pruning all but the 10 largest components at each time, and the lightest colour represents the likelihood obtained by pruning all but the single largest component at each time.
 The simulated data is the same as used in \Cref{subsec:inference}. \label{plot_approximate_likelihood}}
\end{quote}
\end{center}
\end{figure}

Next we evaluate more systematically the performance of the pruning strategies for filtering, smoothing and likelihood computation, by comparing them against a bootstrap particle filter with adaptive resampling for the likelihood, 
and a \glsfirst{FFBS} particle smoother with adaptive resampling. 
The bootstrap particle filter requires sampling from the transition density of the processes, which is straightforward for the \gls{CIR} process and feasible for the \gls{WF} through the strategy in \cite{jenkins2017exact}.
The \gls{FFBS} particle smoother (or the two-filter particle smoother) requires the evaluation of the transition density of the process.
This is tractable for the \gls{CIR} model, using the representation of the transition density via Bessel functions (see~\eqref{CIR transition with Bessel} below)
but not for the \gls{WF} model, where no straightforward representation appears to be available. Thus, we do not implement the particle smoother for the \gls{WF} model.

We first investigate the performance of the pruning approximations for computing the likelihood. 
As computing the filtering and the likelihood is very similar, the performance of the approximations is almost the same.
To quantify the loss of precision due to the approximation, we compute the absolute error on the likelihood resulting from the pruning, and a root mean squared error for the bootstrap particle filter based on 50 replicates.
To quantify the gain in efficiency with the approximation, we measure the time needed to compute the approximate likelihood of the whole dataset relative to the time needed to compute the exact likelihood.
For the particle filter, we measure the time needed to obtain a single estimate.
\Cref{likelihood_accuracy_vs_relative_computing_time} shows the absolute error on the likelihood plotted against the time needed to compute the approximate likelihood, and provides clear evidence that the pruning strategies entail a gain in terms of computing time by three orders of magnitude for the \gls{WF} case and by four to five orders of magnitude for the \gls{CIR} case. 
At the same time, the pruning strategies offer a drastically better precision than the bootstrap particle filters. 
For the \gls{CIR}, the particle filter runtime necessary to obtain a  precision similar to the pruning strategies seems to be three orders of magnitude larger, while it seems even larger for the \gls{WF}.

To quantify the loss of precision incurred for the smoothing, the simplest approach which extends to a multivariate setting is to compute the $L_2$ distance between the exact and the approximate filtering or smoothing distributions obtained via pruning. 
The $L_2$ distance between two mixtures of gamma or Dirichlet distributions respectively has an explicit analytical expression under certain conditions (see supplementary material Section \textcolor{blue}{S2}
, which are satisfied in our setting. 
As there is one smoothing distribution per observation time, we consider the maximum over time of the above distances. 
In order to compare these to the exact and approximate smoothing distributions obtained with pruning, we exploit the fact that a sample from the particle smoother comes from a mixture of gamma distributions.
We estimate the smoothing distributions associated with the particle smoother by means of gamma kernel density estimates \citep{chen2000probability, Boone2014}.
The particle smoothing distribution is then estimated by a mixture of gamma components with equal weights, and it is possible to compute the $L_2$ distance with the exact smoothing distribution as with the pruning approximations.
\Cref{smoothing_performance} shows that the pruning approximations to the smoothing distributions are even more efficient than for the likelihood and the filtering.
This is because the particle smoothing computing time seems dominated by the evaluation of the Bessel function in the \gls{CIR} transition density, while our exact dual smoothing algorithm avoids evaluating the transition density altogether.

\begin{figure}[t]
\begin{center}
\includegraphics[width = .8\textwidth]{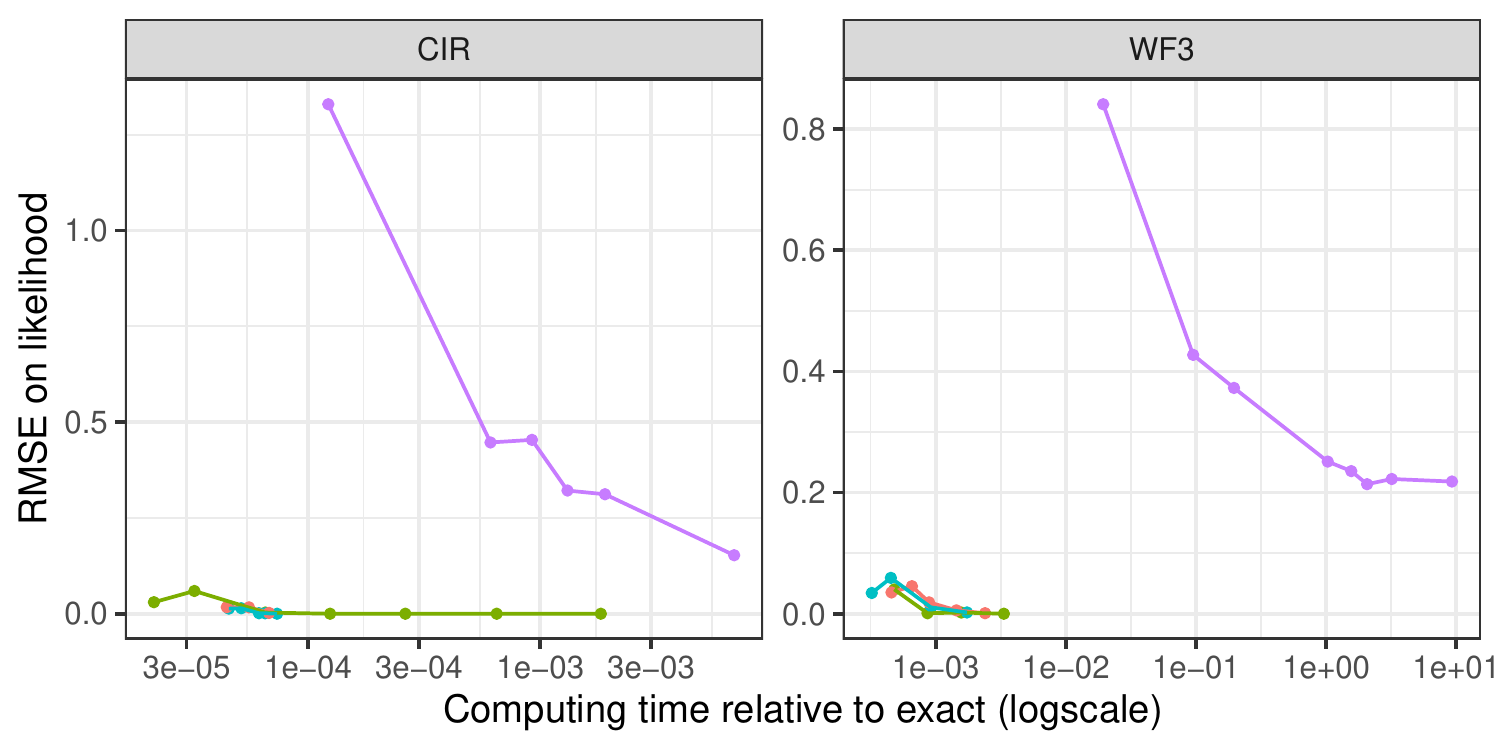}

\begin{quote}
\caption{ Accuracy of the likelihood estimate (RMSE) against computing time (relative to computing the exact likelihood), for the \gls{CIR} (left) and the \gls{WF} (right).
Blue: \emph{fixed threshold} approximations with thresholds  0.01, 0.005, 0.001, 0.0005, 0.0001 for \gls{CIR}  and 0.01, 0.005, 0.001, 0.0001 for \gls{WF}.
Green: \emph{fixed number} approximations with 5, 10, 25, 50, 100, 200, 400 components for \gls{CIR} and
 50, 100, 200, 400 for \gls{WF}.
Red: \emph{fixed fraction} approximation with probability masses 0.95, 0.99, 0.999 for \gls{CIR} and 0.8, 0.9, 0.95, 0.99, 0.999 for \gls{WF}.
Violet: particle filter approximations with 1000, 5000, 7500, 10 000, 15 000, 50 000 particles for \gls{CIR} and 100, 500, 1000, 5000, 7500, 10 000, 15 000, 50 000 for \gls{WF}.
\label{likelihood_accuracy_vs_relative_computing_time}}
\end{quote}
\end{center}
\end{figure}

\begin{figure}[t]
\begin{center}
\includegraphics[width = .45\textwidth]{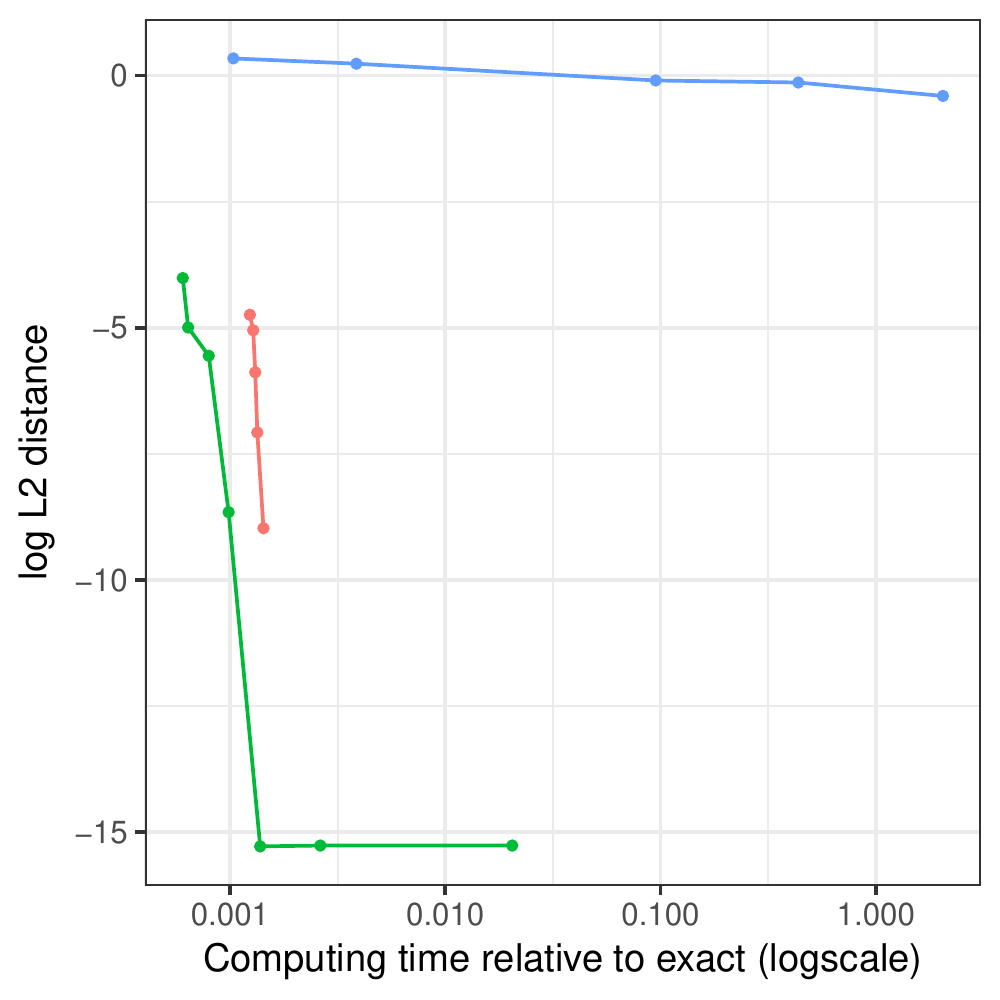}

\begin{quote}
\caption{ Log of the maximum $L_2$ distance between the exact smoothing distributions and the approximated smoothing distributions for the \gls{CIR} model. 
Green: \emph{fixed number} approximation with 5, 10, 25, 50, 100, 200 components. Red: \emph{fixed fraction} approximation with 0.8, 0.9, 0.95, 0.99, 0.999 percent of the total mass retained.
Blue: particle filter approximation with  50, 100, 500, 1000, 2500 particles.
\label{smoothing_performance}}
\end{quote}
\end{center}
\end{figure}

In summary, all pruning strategies are shown to provide very fast and accurate alternatives to the exact strategies, outperforming particle filtering by orders of magnitude.

\subsection{Inference on model parameters}\label{subsec:inference_param}

A fast approximation of the likelihood using the strategy outlined in Section \ref{sec:pruning} unlocks the door to performing inference on the model parameters (denoted \(\psi\) in the Introduction). 
Additionally, using results from \Cref{sec:joint_inference}, one can also address joint inference on signal trajectory and model parameters. 
In this section, we illustrate both inference problems on simulated datasets.
We use the two models considered in previous sections with the specifications of Section \ref{subsec:inference}.

For the \gls{CIR} model, we set parameters $a = 5.0$, $b = 2.4$ and $s = 4.0$ with 2 observations at each of 200 times separated by 0.011 time units.
We use weakly informative exponential priors with rate parameter equal to 0.01 for  each parameter.
For the \gls{WF} model, we choose $\alpha = (1.1, 2.5, 2.1)$, with 15 individuals observed from the population every 0.1 time units over 100 collection times.
We use a weakly informative prior given by a half-normal distribution of location 5 and scale 4 for each parameter.

\subsubsection{Marginal inference on the parameters}

We build an algorithm which uses the filter to estimate the joint posterior on the parameters, via Metropolis--Hastings.
Specifically, we implement a standard symmetric random walk Metropolis--Hastings with Gaussian jumps \gls{MCMC} algorithm to sample from the posterior distribution on the parameters using pruning approximations to the likelihood.
A good jump size was estimated on a pilot run, based on an estimation of the variance-covariance of the posterior near the mode and scaling the sizes to achieve a good acceptance rate, in the range [0.2,0.4].
We ran three chains until convergence, which was estimated from the potential scale reduction factor (implementation from \texttt{MCMCDiagnostics.jl}).
\Cref{fig:ACF_plots} presents the autocorrelation plots for both models.
The approximation of the likelihood is obtained by pruning all but the 10 largest components of the filtering distributions at each time step.
\Cref{MCMC_params} shows that we correctly recover the original parameters. 

\begin{figure}
 \begin{center}
\includegraphics[width = 0.6\textwidth]{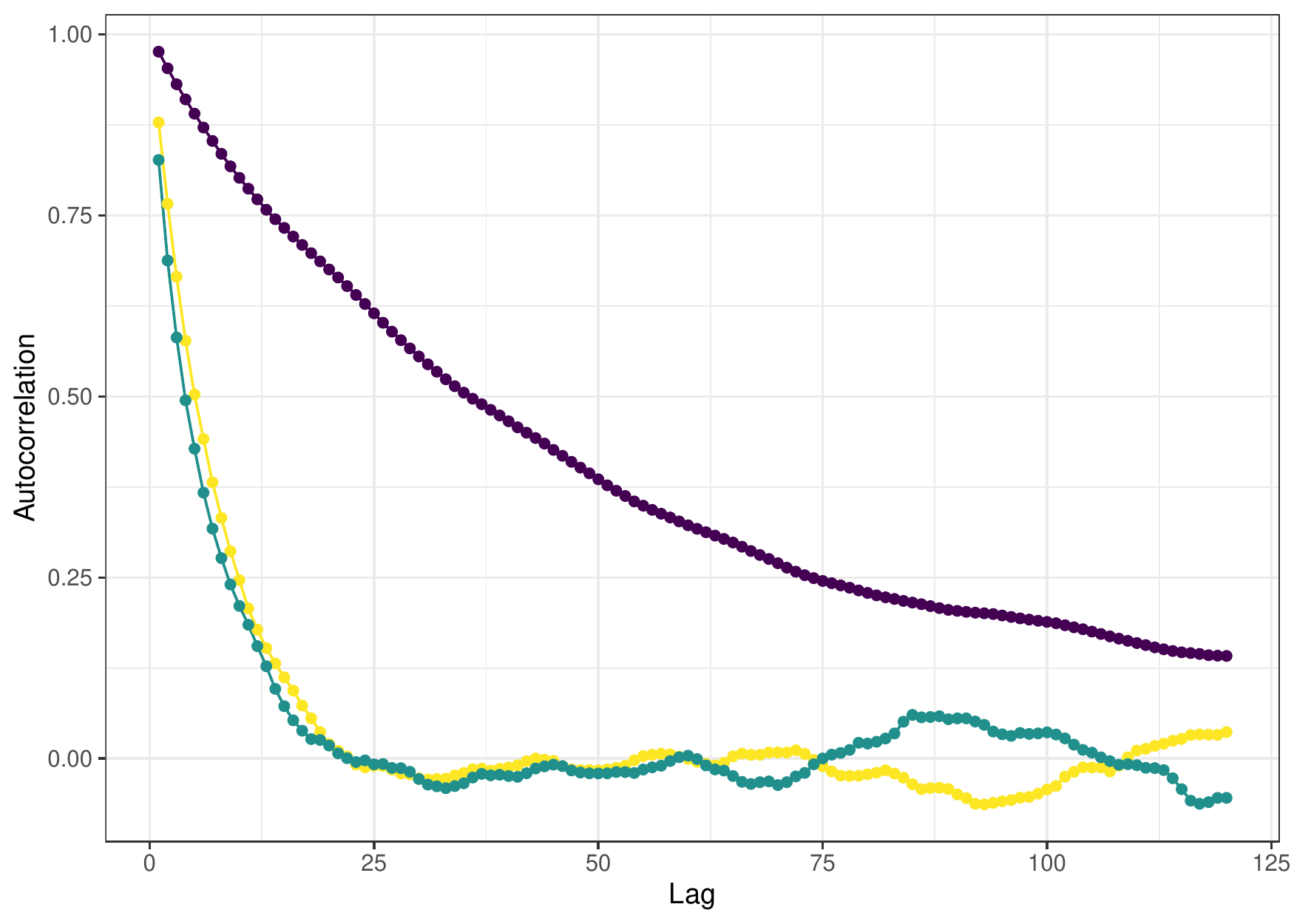}
\begin{quote}
 \caption{Autocorrelation function of the \gls{MCMC} chains for $\theta$ (\gls{CIR}) and for $\alpha_1$ (\gls{WF}). The darkest line corresponds to the joint \gls{CIR} inference, the intermediate line corresponds to the marginal \gls{CIR} inference and the lightest line to the marginal \gls{WF} inference. \label{fig:ACF_plots}}
\end{quote}
 \end{center}
\end{figure}

\begin{figure}
 \begin{center}
  \includegraphics[width=0.4\textwidth]{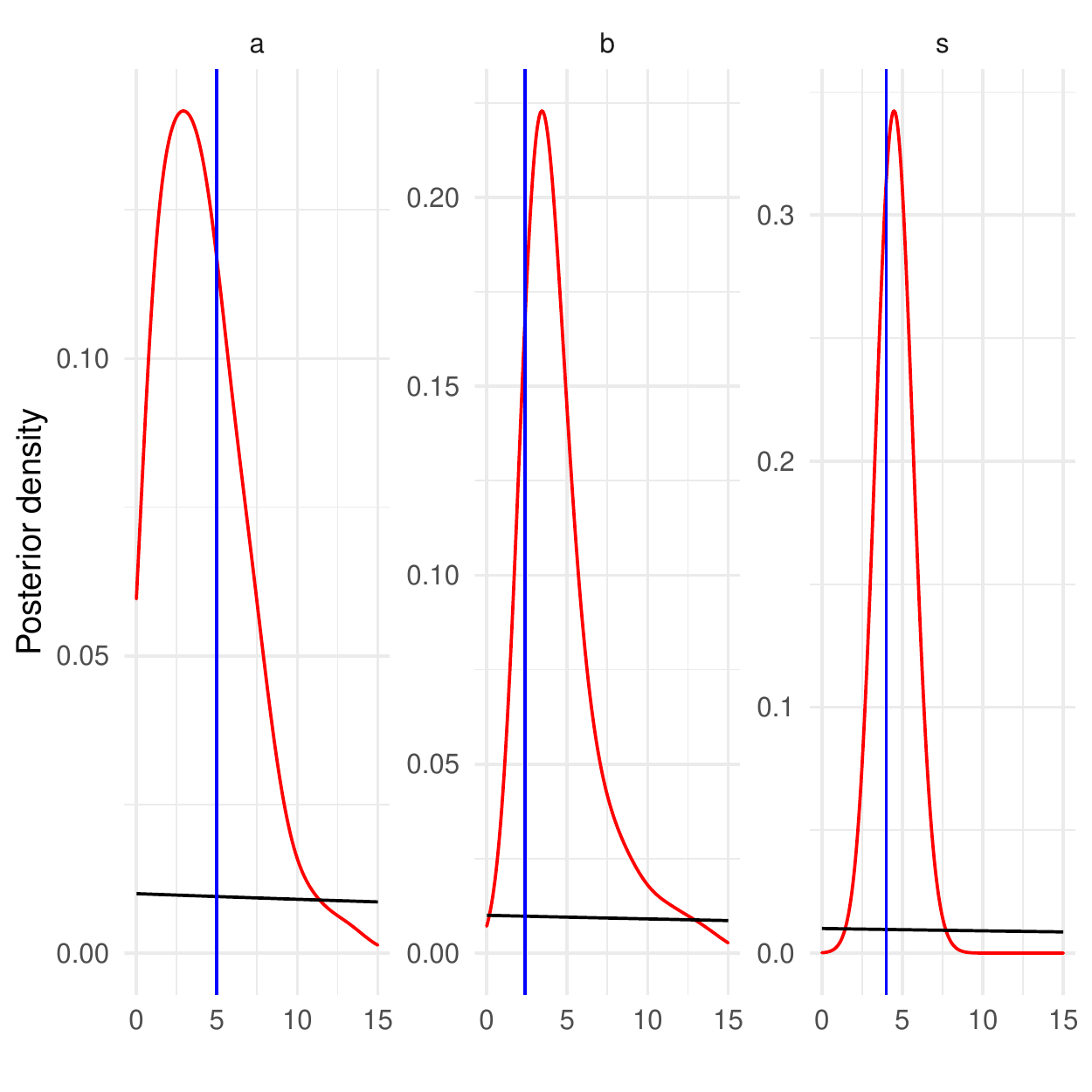}
\includegraphics[width=0.4\textwidth]{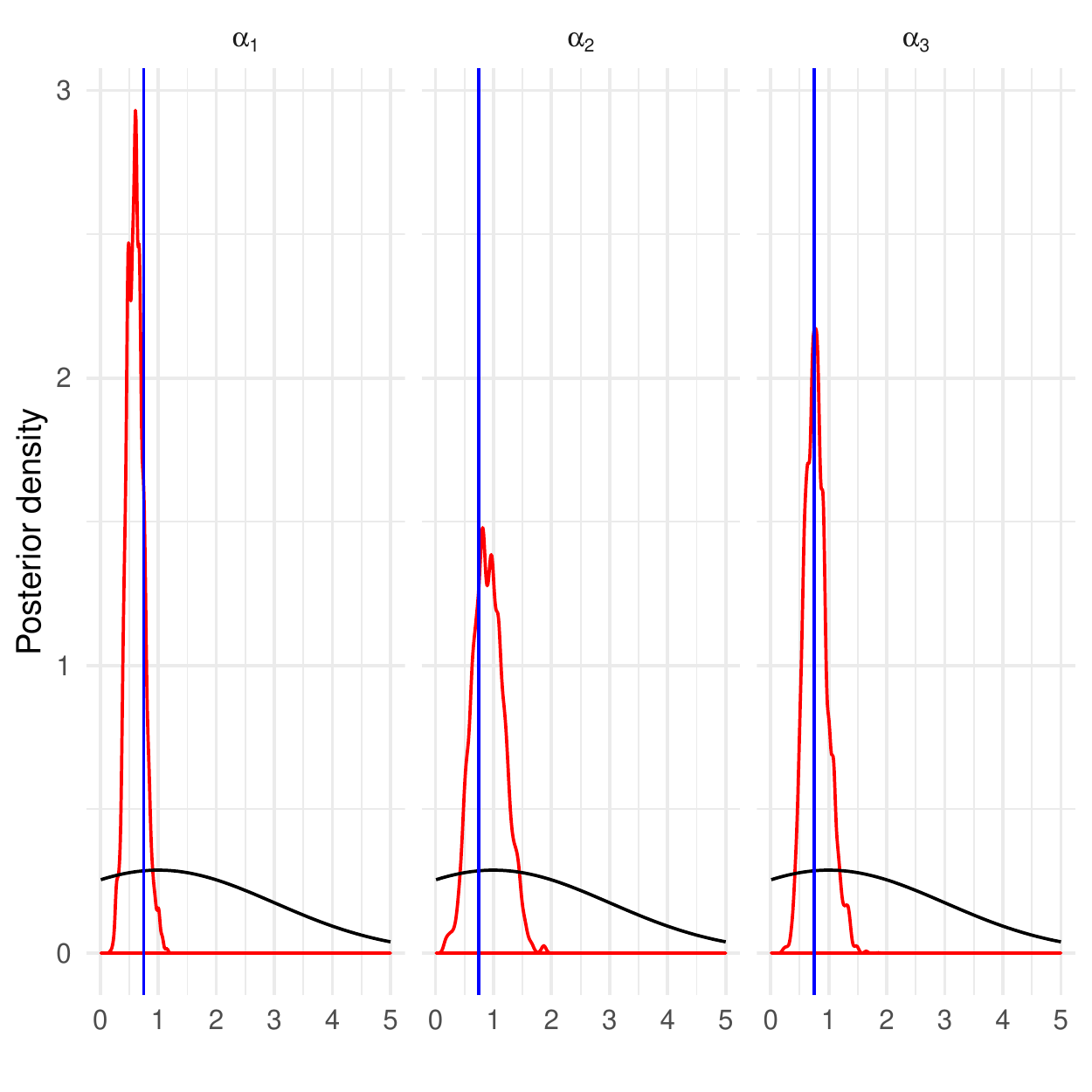}
\begin{quote}
\caption{ Prior (black) and posterior distributions (red) for \gls{CIR} (left) and \gls{WF} (right) parameters, together with true parameters (blue vertical). \label{MCMC_params}}
\end{quote}
\end{center}
\end{figure} 

\subsubsection{Joint inference on signal and parameters}

Using results from \Cref{sec:joint_inference}, we are also able to perform joint inference on the trajectory and the model parameters. 
We illustrate this in the case of the \gls{CIR} model, using a Gibbs sampler with a Metropolis--Hastings step. 
With respect to the previous algorithm for inference on the parameters, this can be viewed as a data augmentation, where states and parameters are sampled jointly.

An outline of the strategy is as follows:
\begin{itemize}
 \item Gibbs
 step: update the signal by sampling a new trajectory from the joint
 smoothing density, conditionally on both parameters and data;
\item Metropolis--Hastings step: update the parameters by computing the likelihood of the data and the trajectory given the parameters.
\end{itemize}

Note that for the Metropolis--Hastings step, we can exploit the following well-known representation for evaluating the transition density of the \gls{CIR} process, written
\begin{equation}\label{CIR transition with Bessel}
 P_{\Delta}(X_{i+1}|X_i)=c\,e^{-u-v} \left (\frac{v}{u}\right)^{q/2} I_{q}(2\sqrt{uv}),
  \label{eq:trCIRbessel}
\end{equation}
where $c=[(1-e^{-2\gamma\Delta})\sigma^2]^{-1}\gamma$,  $q = \delta/2-1$, $u = c X_i e^{-2\gamma\Delta}$, $v = c X_{i+1}$, and $I_{q}(2\sqrt{uv})$ is a modified Bessel function of the first kind of order $q$. 
The absence of such an expression for the transition density of the \gls{WF} process prevents from applying the same strategy directly.

The Metropolis--Hastings step is relatively cheap and consists in evaluating the likelihood for the parameters conditional on the data and a sampled trajectory, which can be done exactly.
The Gibbs step, which consists in simulating a full trajectory conditional on the sampled parameters, can be computationally intensive as it involves summing the weights in the filtering mixtures several times, but can be accelerated using a pruning strategy for the filtering distributions.
We use the same priors as for the inference on the model parameters, and ran the Gibbs algorithm for 10000 iterations. 
We tune the proposal distributions for the Metropolis--Hastings step using the posterior from the marginal sampler for convenience, but this could also be done using a pilot run. 
\Cref{fig:joint_inference_CIR} shows that both the parameter values and the hidden trajectory of the  \gls{CIR} are correctly recovered.
\Cref{fig:ACF_plots} presents the autocorrelation of the chain and shows that the chain used for joint inference had more autocorrelation than those for marginal inference for this simulated dataset. 
The autocorrelation plots for the other parameters are provided in the supplementary material (Figure \textcolor{blue}{S1}) 
.

\begin{figure}
\begin{center}
 \includegraphics[width = 0.47\textwidth]{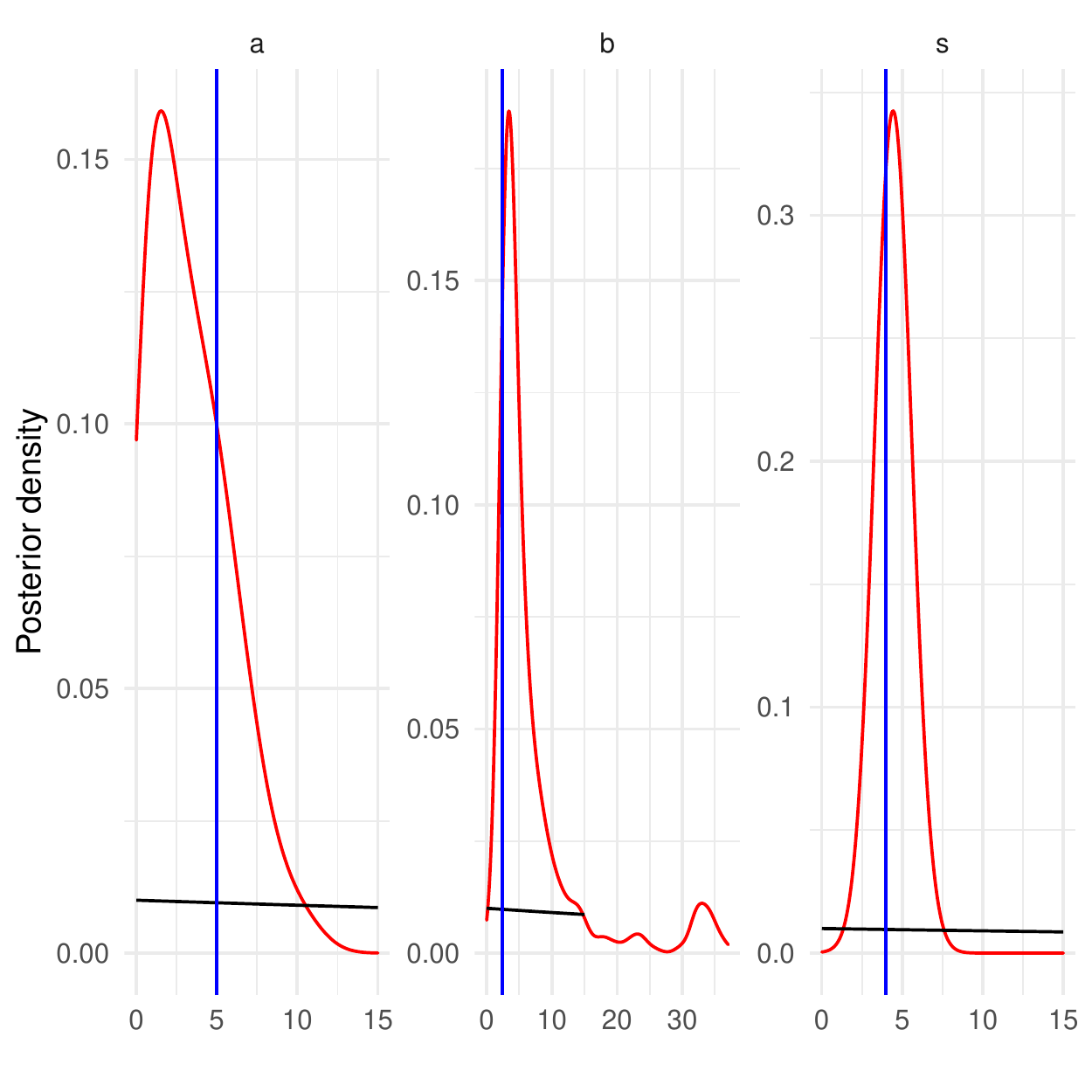}
 \includegraphics[width = 0.47\textwidth]{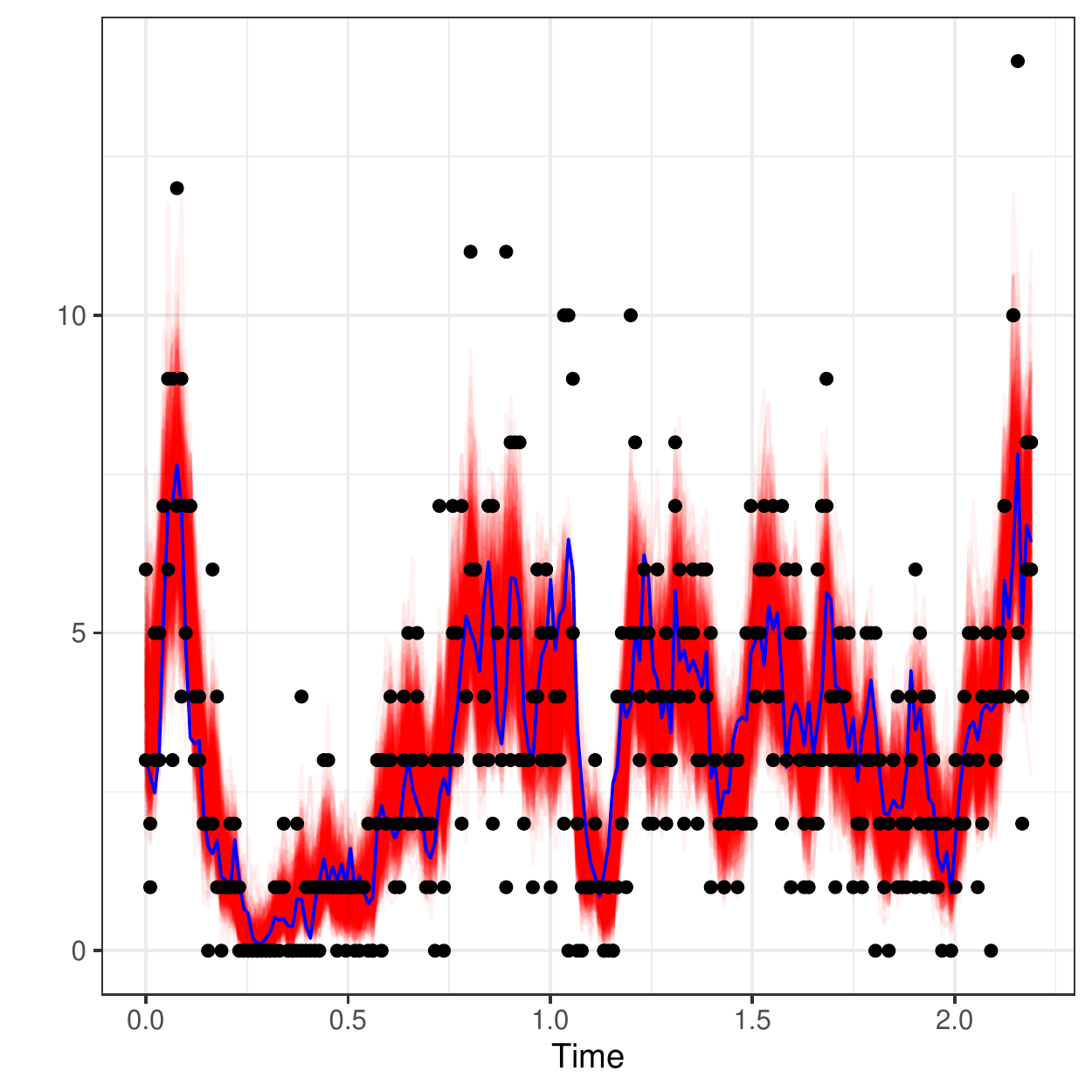}
 \begin{quote}
\caption{\scriptsize Left: prior (black) and posterior distributions (red) for the \gls{CIR} model parameters, along with the true values (blue verticals). 
 Right: a few samples from the posterior distribution (red) for the trajectory and true trajectory (blue) with observed data (black dots).
 \label{fig:joint_inference_CIR}}
\end{quote}
\end{center}
\end{figure}

\section{Discussion}

\cite{Papaspiliopoulos2014a} unveiled that the existence of a certain type of \emph{dual process} for the hidden signal lead to exact expressions of the filtering distributions, expressed in \emph{computable form} as mixtures whose number of components varies with the number of observations but is always finite, thus entailing a polynomial computational complexity. 
Here we have shown that, under essentially the same set of conditions, one can also obtain in computable form expressions for the likelihood and for the marginal smoothing distributions, which can thus be evaluated exactly. 
These expressions allow to design algorithms for addressing inference on the signal trajectory and on the model parameters, detailed in the paper, thus covering the whole agenda of inference for a hidden Markov model. 
We have outlined acceleration strategies based on informed pruning of the mixture representations of the distributions of interest, and showed that the resulting approach vastly outperformed both in accuracy and computing time a generic approximation scheme like particle filtering and smoothing, bringing the computational burden close to linear in the number of observations.
Our algorithms have been made publicly  available via the dedicated Julia package \texttt{DualOptimalFiltering}, which implements a general interface applicable to any 
\gls{HMM} satisfying the conditions outlined in this paper.

Possible direct developments of this work could concern using the exact expressions we obtained to devise efficient particle filters. 
One approach would be to use the pruning approximations to guide a particle filter. 
Another could be to filter the dual process instead of the signal, since the former lives in a discrete state space and would be reasonably easy to sample.


\section*{Acknowledgements}

The third author is partially supported by the Italian Ministry of Education, University and Research (MIUR) through PRIN 2015SNS29B and through ``Dipartimenti di Eccellenza'' grant 2018-2022.

\appendix
\section*{Appendix}

\subsection*{Proof of Theorem \ref{thm: likelihood}}

Using \eqref{prediction in thm} we have
\begin{equation}
\begin{aligned}
\int_{\X}f_{x_{i}}(y_{i})\nu_{i|0:i-1}(x_{i})
=&\int_{\X}f_{x_{i}}(y_{i})\sum_{\mm \in \MM_{i|0:i-1}}w_\mm^{(i)'}g(x, \mm, \vartheta_{i|0:i-1})\\
=&\,
\sum_{\mm \in \MM_{i|0:i-1}}w_\mm^{(i)'}\int_{\X}f_{x_{i}}(y_{i})g(x, \mm, \vartheta_{i|0:i-1})\\
=&\,\sum_{\mm \in \MM_{i|0:i-1}}w_\mm^{(i)'}\mu_{\mm,\vartheta_{i|0:i-1}}(y_i),
\end{aligned}
\end{equation} 
the last identity following from \eqref{marginals}. Hence the marginal likelihood of $y_{i}$ is
\begin{equation}\label{marginal likelihood}
\mu_{\nu_{i|0:i-1}}(y_{i})
:= \int_{\X}f_{x_{i}}(y_{i})\nu_{i|0:i-1}(x_{i}) = \sum_{\mm \in \MM_{i|0:i-1}}w_\mm^{(i)'}\mu_{\mm,\vartheta_{i|0:i-1}}(y_i)
\end{equation} 
with $w_\mm^{(i)'}$ as in \eqref{weights of filtering recursion} and $\mu_{\mm,\vartheta_{i|0:i-1}}$ as in \eqref{marginals}.
Dividing and multiplying by $\mu_{\oo,\theta_0}(y_0)= \int_{\X}f_{x_{0}}(y_{0})\pi(x_0)$ we can write
\begin{equation}
\int_{\X}P_{\Delta}(x_1\mid x_{0})f_{x_{0}}(y_0)\pi(\d x_0)
=\mu_{\oo, \theta_0}(y_{0})\int_{\X}P_{\Delta}(x_1\mid x_{0})\phi_{y_0}(\pi(\d x_0)).
\end{equation} 
The right hand side equals $\mu_{\oo,\theta_0}(y_{0})\nu_{1|0:1}(x_{1})$, so the equation for the likelihood becomes
\begin{equation}
\begin{aligned}
p(y_{0:T}) 
 = \mu_{\oo,\theta_0}(y_{0})
\int_{\X^{T}} \prod_{i=2}^{T} f_{x_{i}}(y_{i})P_{\Delta}(x_i\mid x_{i-1})f_{x_{1}}(y_{1})
\nu_{1|0:1}(x_{1})
\end{aligned}
\end{equation} 
Iterating with $\mu_{\nu_{i|0:i-1}}(y_i)= \int_{\X}f_{x_{i}}(y_{i})
\nu_{i|0:i-1}(x_{i})$ and $\phi_{y_{i}}(\nu_{i|0:i-1})$ in place of $\mu_{\oo, \theta_0}(y_0)$ and $\phi_{y_{0}}(\nu)$ leads to writing
\begin{equation}
p(y_{0:T}) =\mu_{\oo,\theta_0}(y_0)\prod_{i=1}^{T} \mu_{\nu_{i|0:i-1}}(y_{i})
\end{equation} 
which, through \eqref{marginal likelihood}, yields the result. \qed

\subsection*{Proof of \Cref{prop: prediction for smoothing}}

We prove \Cref{prop: prediction for smoothing} by induction. 
For $i=T-1$, we have
\begin{equation}\label{proof_smoothing}
\begin{aligned}
p(y_{T}\mid x_{T-1})
 = &\, \int_\X p(y_{T}\mid x_{T}) P_{\Delta}(x_{T}\mid x_{T-1})\d x_{T}\\
 = &\, \int_\X f_{x_{T}}(y_{T})h(x_t, \oo, \theta_0) P_{\Delta}(x_{T}\mid x_{T-1})\d x_{T} \\
 = &\, \mu_{\oo, \theta_0}(y_T) \int_\X h(x_t, t(y_{T},\oo), T(y_{T},\theta_0)) P_{\Delta}(x_{T}\mid x_{T-1})\d x_{T}\\
= &\, \mu_{\oo, \theta_0}(y_T)\mathbb{E}^{x_{T-1}}[h(X_{T}, \nn, \avt_{T:T})] \\
\end{aligned}
\end{equation} 
where in the penultimate equality we used the fact that using \Cref{A: conjugacy} and \eqref{marginals} allow to write
\begin{align}\label{eq:fh=muh}
\begin{aligned}
f_{x}(y)h(x, \mm, \theta) =&\,
\mu_{\mm, \theta}(y)\, h(x, t(y, \mm), T(y, \theta)),
\end{aligned}
\end{align} 
and in the last equality we let $\nn=t(y_{T},\oo)$ and $\avt_{T:T}$ is as in \eqref{backward lambda e vartheta}. 
Lemma \textcolor{blue}{S1.2} 
now implies 
\begin{equation}\label{gamtm1}
\begin{aligned}
p(y_{T}\mid x_{T-1})
 = &\, \sum_{\mm \in \aMM_{T-1:T}, \nn \in \aMM_{T:T}, \nn \ge \mm} 
\mu_{\oo, \theta_0}(y_T)p_{\nn,\mm}(\Delta;\avt_{T:T})
 h(x_{T-1}, \mm, \avt_{T-1:T}) 
\end{aligned}
\end{equation} 
with $\aMM_{T-1:T}=\B(\aMM_{T:T})=\B(\{\nn\})$. 
Hence the statement holds for $i=T-1$ with
$$ \overleftarrow w^{(T)}_{\mm}=\mu_{\oo, \theta_0}(y_T)p_{\nn,\mm}(\Delta;\avt_{T:T}), \quad \nn=t(y_{T},\oo),$$
since $\{\ii:\ \ii\in\aMM_{T:T}, \ii\ge \mm\}=\{\nn\}$ in \eqref{weights of prediction for smoothing}.
Assume now it holds for $p(y_{i+2:T}\mid x_{i+1})$, i.e.
\begin{equation}
 p(y_{i+2:T}\mid x_{i+1}) 
 = 
 \sum_{\mm \in \aMM_{i+1|i+2:T} } \overleftarrow w^{(i+2)}_{\mm}h(x_{i+1}, \mm, \avt_{i+1|i+2:T}).
\end{equation} 
Then
\begin{align}
p(&\,y_{i+1:T}\mid x_i)=\int_\X p(y_{i+1}\mid x_{i+1}) p(y_{i+2:T}\mid x_{i+1})P_{\Delta}(x_{i+1}\mid x_{i})\d x_{i+1} \\
=&\sum_{\mm \in \MM_{i+1|i+2:T} }\overleftarrow w_{\mm}^{(i+2)}\int_\X f_{x_{i+1}}(y_{i+1}) h(x_{i+1}, \mm, \avt_{i+1|i+2:T})P_{\Delta}(x_{i+1}\mid x_{i})\d x_{i+1}\\
=&\sum_{\mm \in \MM_{i+1|i+2:T} }\overleftarrow w_{\mm}^{(i+2)}
\mu_{\mm,\avt_{i+1|i+2:T}}(y_{i+1})\\
&\,\times
\int_\X h(x_{i+1}, t(y_{i+1},\mm), T(y_{i+1},\avt_{i+1|i+2:T}))P_{\Delta}(x_{i+1}\mid x_{i})\d x_{i+1}\\
=&\sum_{\mm \in \MM_{i+1|i+2:T} }\overleftarrow w_{\mm}^{(i+2)}
\mu_{\mm,\avt_{i+1|i+2:T}}(y_{i+1})
\E_{x_{i}}[h(X_{i+1}, t(y_{i+1},\mm), \avt_{i+1|i+1:T})]
\end{align}
where the third identity follows from \eqref{eq:fh=muh} and in the fourth we have used $T(y_{i+1},\avt_{i+1|i+2:T})=\avt_{i+1|i+1:T}$. Since $\MM_{i+1|i+1:T}=t(y_{i+1},\MM_{i+1|i+2:T})$, applying again 
Lemma \textcolor{blue}{S1.2} 
yields 
\begin{equation}
\begin{aligned}
\sum_{\mm \in \MM_{i+1|i+2:T} }&\,
\overleftarrow w_{\mm}^{(i+2)}
\mu_{\mm,\avt_{i+1|i+2:T}}(y_{i+1})
\E_{x_{i}}[h(X_{i+1}, t(y_{i+1},\mm), \avt_{i+1|i+1:T})] \\
=&\,\sum_{\mm \in \MM_{i+1|i+2:T} }
\overleftarrow w_{\mm}^{(i+2)}
\mu_{\mm,\avt_{i+1|i+2:T}}(y_{i+1}) \\
& \sum_{\nn \le t(y_{i+1},\mm)}p_{t(y_{i+1},\mm),\nn}(\Delta; \avt_{i+1|i+1:T})h(x_{i}, \nn, \avt_{i|i+1:T}) \\
 = &\, \sum_{\mm \in \MM_{i|i+1:T} } \\&
 \sum_{\nn\in\MM_{i+1|i+2:T}, \nn\ge \mm}
\overleftarrow w_{\nn}^{(i+2)}
\mu_{\nn,\avt_{i+1|i+2:T}}(y_{i+1})
p_{t(y_{i+1},\nn),\mm}(\Delta; \avt_{i+1|i+1:T})
h(x_{i}, \mm, \avt_{i|i+1:T}) 
\end{aligned}
\end{equation} 
where $\avt_{i|i+1:T}=\Theta_{\Delta}(\avt_{i+1|i+1:T})$. 
Finally, since $\MM_{i|i+1:T}=\B(\MM_{i+1|i+1:T})$, we obtain
\begin{equation}
\sum_{\mm \in \aMM_{i|i+1:T} } \overleftarrow w^{(i+1)}_{\mm}h(x_{i}, \mm, \avt_{i|i+1:T})
\end{equation} 
with 
\begin{equation}
\overleftarrow w^{(i+1)}_{\mm}
=\sum_{\nn\in\aMM_{i+1|i+2:T}, t(y_{i+1}, \nn) \ge \mm}
\overleftarrow w_{\nn}^{(i+2)}
\mu_{\nn,\avt_{i+1|i+2:T}}(y_{i+1})
p_{t(y_{i+1},\nn),\mm}(\Delta; \avt_{i+1|i+1:T}).
\end{equation} 
\qed

\subsection*{Proof of \Cref{thm: smoothing}}

From Propositions \ref{prop:rec_filtering} and \ref{prop: prediction for smoothing}, the numerator on the right hand side of \eqref{smoothing_expression} reads
\begin{equation}
 \sum_{\mm \in \aMM_{i|i+1:T} }\overleftarrow w^{(i+1)}_{\mm}h(x_{i}, \mm, \avt_{i|i+1:T})
 \sum_{\nn \in \MM_{i|i:T}}w_\nn^{(i)} h(x_{i}, \nn, \vartheta_{i|0:i})\pi(x_{i}).
\end{equation} 
\textbf{}
By \Cref{A: h-stability} the previous equals
\begin{equation}
 \sum_{\mm \in \aMM_{i|i+1:T} } 
 \sum_{\nn \in \MM_{i|i:T}}
 \overleftarrow w^{(i+1)}_{\mm}
 w_\nn^{(i)}C_{\mm, \nn, \avt_{i|i+1:T},\vartheta_{i|0:i}}
 h(x_{i}, d(\mm,\nn), e(\avt_{i|i+1:T},\vartheta_{i|0:i}))\pi(x_{i}).
\end{equation} 
Since
\begin{equation}
\int_{\X}h(x_{i}, d(\mm,\nn), e(\avt_{i|i+1:T},\vartheta_{i|0:i}))\pi(x_{i})=1
\end{equation} 
by \Cref{A: conjugacy} and the above sum is finite, the latter can be normalised, so \eqref{smoothing_expression} is the distribution
\begin{equation}
\begin{aligned}
p(x_i\mid y_{0:T})
=&\,
\sum_{\mm \in \aMM_{i|i+1:T},\nn \in \MM_{i|i:T}}
w_{\mm,\nn}^{(i)}g(x_{i}, d(\mm,\nn), e(\avt_{i|i+1:T},\vartheta_{i|0:i})),
\end{aligned}
\end{equation} 
with 
\begin{equation}
w_{\mm,\nn}^{(i)}=\frac{\overleftarrow w^{(i+1)}_{\mm} w_\nn^{(i)}C_{\mm, \nn, \avt_{i|i+1:T},\vartheta_{i|0:i}}}
{\sum_{\ii \in \aMM_{i|i+1:T},\jj \in \MM_{i|i:T}}
\overleftarrow w^{(i+1)}_{\ii} w_{\jj}^{(i)}C_{\ii, \jj, \avt_{i|i+1:T},\vartheta_{i|0:i}}}.
\end{equation} 
\qed

\subsection*{Proof of \Cref{thm: WF_joint_smoothing}}

The forward kernel \eqref{forward kernel} 
involves the cost-to-go functions, available from \Cref{prop: prediction for smoothing}, the Multinomial emission density, and the transition density \eqref{eq: WF transition dens} of the \gls{WF} process.
Then we can write
\begin{align}
   p(&\xx_{i}|\xx_{i-1},\yy_{i:T}) =  p(\yy_{i:T}|\xx_{i-1})^{-1}p(\xx_{i}|\xx_{i-1})p(\yy_{i}|\xx_{i})p(\yy_{i+1:T}|\xx_{i})  \\ 
=&\,p(\yy_{i:T}|\xx_{i-1})^{-1}\sum_{m=0}^\infty q_m \sum_{\ll \in \N^K; l_1+\ldots+l_K = m} \text{Multinom}(\ll|m,\xx_{i-1})\Dir(\xx_i\mid\ll + \xx_{i-1}+\yy_i) \\ &\frac{\Gamma(|\yy_i|)\mathcal{B}(\ll + 
   \xx_{i-1}+\yy_i)}{\prod_j \Gamma(y_{ij})\mathcal{B}(\ll + 
   \xx_{i-1})} \sum_{\kk \in \aMM_{i|i+1:T}}
 \overleftarrow w^{(i+1)}_{\kk}h(\xx_{i}, \kk) \\
=&\,p(\yy_{i:T}|\xx_{i-1})^{-1}\sum_{m=0}^\infty q_m \sum_{\ll \in \N^K; l_1+\ldots+l_K = m} \sum_{\kk \in \aMM_{i|i+1:T}}\text{Multinom}(\ll|m,\xx_{i-1})\\ &\, \times \frac{\Gamma(|\yy_i|)\mathcal{B}(\ll + 
   \xx_{i-1}+\yy_i)}{\prod_j \Gamma(y_{ij})\mathcal{B}(\ll + 
   \xx_{i-1})} 
 \overleftarrow w^{(i+1)}_{\kk}h(\xx_{i}, \kk)\Dir(\xx_i\mid\ll + \xx_{i-1}+\yy_i)\\
=&\,p(\yy_{i:T}|\xx_{i-1})^{-1}\sum_{m=0}^\infty q_m \sum_{\ll \in \N^K; l_1+\ldots+l_K = m} \sum_{\kk \in \aMM_{i|i+1:T}}\text{Multinom}(\ll|m,\xx_{i-1})\\ &\, \times  \frac{\Gamma(|\yy_{i}|)}{\prod_j \Gamma(y_{ij})}  \frac{\Gamma(|\aa|+|\kk|)\prod_j \Gamma(\alpha_j)}{\Gamma(|\aa|)\prod_j \Gamma(\alpha_j+k_j)}
   \frac{\mathcal{B}(\ll + 
   \xx_{i-1}+\yy_i+\kk)}{\mathcal{B}(\ll + 
   \xx_{i-1})}
 \overleftarrow w^{(i+1)}_{\kk} \\
 &\, \times\Dir(\xx_i\mid\ll + \xx_{i-1}+\yy_i+\kk).
 \end{align}
Denote now 
\begin{align}
w_{\ll, \kk} &= \mathcal{B}(\ll + 
   \xx_{i-1}+\yy_i+\kk)\frac{\Gamma(|\aa|+|\kk|)}{\prod_j \Gamma(\alpha_j+k_j)}  \overleftarrow w^{(i+1)}_{\kk} \\
   r_{\ll} & = \sum_{\kk \in \aMM_{i|i+1:T}} w_{\ll, \kk}\\
   \tilde w_{\ll, \kk} & = r_{\ll}^{-1}w_{\ll, \kk}\\
   r_m & = \sum_{\ll \in \N^K; l_1+\ldots+l_K = m} r_{\ll}\frac{\text{Multinom}(\ll|m,\xx_{i-1})}{\mathcal{B}(\ll + 
   \xx_{i-1})}  \\
   \tilde  w_{\ll}^m &= r_m^{-1}r_{\ll}\frac{\text{Multinom}(\ll|m,\xx_{i-1})}{\mathcal{B}(\ll + 
   \xx_{i-1})}\\
   \tilde{q}_m & = p(\yy_{i:T}|\xx_{i-1})^{-1} \frac{\Gamma(|\yy_{i}|)\prod_j \Gamma(\alpha_{j})}{\prod_j \Gamma(y_{ij})\Gamma(|\alpha|)} r_m q_m = \tilde r_m q_m.
\end{align} 
We can then write  $   p(\xx_{i}|\xx_{i-1},\yy_{i:T})$ as the mixture
\begin{align} \label{eq: WF forward mixture form}
 p(x_{i}|x_{i-1},y_{i:T}) = \sum_{m=0}^\infty \tilde{q}_m \sum_{\ll \in \N^K; l_1+\ldots+l_K = m} \tilde  w_{\ll}^m \sum_{\kk \in \aMM_{i|i+1:T}}\tilde w_{\ll, \kk}\Dir(\xx_i|\ll + \xx_{i-1}+\yy_i+\kk),
\end{align}
where, by construction,
\begin{align}
&\forall m \in \N,  \forall \ll \in \N^K \text{ such that } l_1+\ldots+l_K = m:\  \sum_{\kk \in \aMM_{i|i+1:T}} \tilde w_{\ll, \kk} =  1\\
&\forall m \in \N:\  \sum_{\ll \in \N^K; l_1+\ldots+l_K = m} \tilde  w_{\ll}^m =  1.
\end{align} 
As the right hand side of \eqref{eq: WF forward mixture form} is a sum of integrable positive functions, integrating both sides with respect to $\xx_i$ implies
\begin{equation}
 \sum_{m = 0}^{\infty} \tilde q_m = 1 \iff \sum_{m \ge 0} r_mq_m = p(y_{i:T}|x_{i-1}) \frac{\prod_j \Gamma(y_{ij})\Gamma(|\alpha|)}{\Gamma(|\yy_{i}|)\prod_j \Gamma(\alpha_{j})}
\end{equation}
implying the forward kernel is an infinite mixture of finite mixtures of Dirichlet distributions with respect to $m$, $\ll$ and $\kk$.

\qed

\subsection*{Sampling from WF the forward kernel}

The two inner sums in \eqref{eq: WF_forward_kernel} are finite mixtures, so the challenge to sample from the \gls{WF} forward kernel lies in how to sample from the infinite mixture with weights $\tilde{q}_m$. 
To this end, let $b_{i}^{(t, \theta)}(m)=a_{i m}^{\theta} e^{-i(i+\theta-1) t / 2}$ where $a_{i m}^{\theta}=\frac{(\theta+2 i-1)(\theta+m)_{(i-1)}}{m !(i-m) !}$. Then the $q_m$ in \eqref{eq: WF transition dens} can be written
\begin{equation}
 q_{m}=\sum_{i=0}^{\infty}(-1)^{i} b_{i}(m).
\end{equation}
Then the $\tilde q_m$ in \eqref{eq: WF_forward_kernel} can be written as $\tilde r_m q_m$ where, for any given $m$, $\tilde r_m$ is a strictly positive real number (cf.~Proof of \Cref{thm: WF_joint_smoothing}). 
Similarly, we define $\tilde b_{i}^{(t, \theta)}(m) = \tilde r_m b_{i}^{(t, \theta)}(m)$. 

Note now that all the properties allowing to sample from the distribution defined by the weights $q_m$ in Proposition 1 of \cite{jenkins2017exact} carry over to the distribution defined by the weights $\tilde q_m$. 
In fact, the following proposition holds:
\begin{proposition}\label{prop: jenkins_adapted}
Let \begin{equation}
 \tilde C_{m}^{(t, \theta)}:=\inf \left\{j \geq 0: \tilde b_{j+m+1}^{(t, \theta)}(m)< \tilde b_{j+m}^{(t, \theta)}(m)\right\}.
\end{equation}
Then
\begin{enumerate}
 \item $ \tilde C_{m}^{(t, \theta)} < \infty$, for all $m$.
 \item $\tilde b_{i}^{(t, \theta)}(m) \downarrow 0 \text { as } i \rightarrow \infty \text { for all } i \geq m+\tilde C_{m}^{(t, \theta)}$.
 \item $\tilde C_{m}^{(t, \theta)}=0 \text { for all } m>D_{0}^{(t, \theta)}, \text { where for } \varepsilon \in[0,1)$,
 \begin{equation}
 D^{(t, \theta)}:=\inf \left\{i \geq\left(\frac{1}{t}-\frac{\theta+1}{2}\right) \vee 0:(\theta+2 i+1) e^{-\frac{(2 i+\theta) t}{2}}<1\right\}.
\end{equation}
\end{enumerate}

\end{proposition}
\begin{proof}
Let $b_{i}^{(t, \theta)}(m)=a_{i m}^{\theta} e^{-i(i+\theta-1) t / 2}$ where $a_{i m}^{\theta}=\frac{(\theta+2 i-1)(\theta+m)_{(i-1)}}{m !(i-m) !}$, and let
\begin{equation}
 C_{m}^{(t, \theta)}:=\inf \left\{i \geq 0: b_{i+m+1}^{(t, \theta)}(m)<b_{i+m}^{(t, \theta)}(m)\right\}.
\end{equation}
Then Proposition 1 in \cite{jenkins2017exact} implies
\begin{enumerate}
 \item $C_{m}^{(t, \theta)}<\infty,$ for all $m$
 \item $b_{i}^{(t, \theta)}(m) \downarrow 0$ as $i \rightarrow \infty$ for all $i \geq m+C_{m}^{(t, \theta)} ;$ and
 \item $C_{m}^{(t, \theta)}=0$ for all $m>D_{0}^{(t, \theta)},$ where for $\varepsilon \in[0,1)$,
 \begin{equation}
 D_{0}^{(t, \theta)}:=\inf \left\{i \geq\left(\frac{1}{t}-\frac{\theta+1}{2}\right) \vee 0:(\theta+2 i+1) e^{-\frac{(2 i+\theta) t}{2}}<1\right\}.
\end{equation}
\end{enumerate}
Note now that  $\tilde C_{m}^{(t, \theta)} = C_{m}^{(t, \theta)}$, since
\begin{align}
 \tilde C_{m}^{(t, \theta)}:=&\inf \left\{j \geq 0: \tilde b_{j+m+1}^{(t, \theta)}(m)< \tilde b_{j+m}^{(t, \theta)}(m)\right\} \\
 = & \inf \left\{j \geq 0: \tilde r_m b_{j+m+1}^{(t, \theta)}(m)<  \tilde r_m b_{j+m}^{(t, \theta)}(m)\right\} \\
 = & \inf \left\{j \geq 0:  b_{j+m+1}^{(t, \theta)}(m)<   b_{j+m}^{(t, \theta)}(m)\right\}
 = C_{m}^{(t, \theta)}
\end{align}
Next, observe that
\begin{enumerate}
 \item $C_{m}^{(t, \theta)}<\infty,$ for all $m $ implies $\tilde C_{m}^{(t, \theta)}<\infty,$ for all $m$.
 \item $b_{i}^{(t, \theta)}(m) \downarrow 0$ as $i \rightarrow \infty$ for all $i \geq m+C_{m}^{(t, \theta)}$ implies that $ \tilde b_{i}^{(t, \theta)}(m) \downarrow 0$ as $i \rightarrow \infty$ for all $i \geq m+\tilde C_{m}^{(t, \theta)} $, as $\tilde b_{i}^{(t, \theta)}(m) = \tilde r_m b_{i}^{(t, \theta)}(m)$ with $\tilde r_m$ a positive constant.
 \item  $C_{m}^{(t, \theta)}=0$ for all $m>D_{0}^{(t, \theta)},$ where for $\varepsilon \in[0,1) $ implies $ \tilde C_{m}^{(t, \theta)}=0$ for all $m>D_{0}^{(t, \theta)},$ where for $\varepsilon \in[0,1)$.
\end{enumerate}
This completes the proof.
\end{proof}

The interpretation of Proposition \ref{prop: jenkins_adapted} is that once the coefficients $\tilde b_{i}^{(t, \theta)}(m)$ start to decay, they keep decaying indefinitely so it becomes possible to bound them and use the alternated series trick from \cite{Devroye1986a}.

To this end, let $\ii = (i_0, i_1, \ldots, i_M)$ and define
\begin{equation}
 \tilde S_{\ii}^{-}(M):=\sum_{m=0}^{M} \sum_{j=0}^{2 i_{m}+1}(-1)^{j} \tilde b_{m+j}^{(t, \theta)}(m), \quad \tilde S_{\ii}^{+}(M):=\sum_{m=0}^{M} \sum_{j=0}^{2 i_{m}}(-1)^{j} \tilde b_{m+j}^{(t, \theta)}(m).
\end{equation}
$\tilde S_{\ii}^{-}$ and $\tilde S_{\ii}^{+}$ form the two convergent series bracketing the target distribution, which in turn allows to use the alternated series method.
Algorithm \ref{algo_WF_sampling} details how to sample from the \gls{WF} forward kernel \eqref{eq: WF_forward_kernel}.

\begin{algorithm}[t]
\SetAlgoLined

\KwIn{$\xx_{i-1}$, $\yy_{i:T}$ and the cost-to-go weights $\overleftarrow{w}_{\kk}^{(i+1)}$}

\KwResult{A sample $\xx_{i}$ from the forward kernel}

\SetKwBlock{Begin}{Initialise}{}

\Begin{

Set $m \longleftarrow 0, k_{0} \longleftarrow 0, \ii \longleftarrow\left(i_{0}\right)$\\

Simulate $U \sim \text { Uniform[0, } 1]$\\

}

\While{TRUE}{

Set $i_{m} \longleftarrow\left\lceil \tilde C_{m}^{(t, \theta)} / 2\right\rceil$ with $\tilde C_{m}^{(t, \theta)}$ as in Proposition \ref{prop: jenkins_adapted}.

\While{$\tilde S_{\mathbf{i}}^{-}(m)<U<S_{\mathbf{i}}^{+}(m)$}{$\text { Set } \mathbf{i} \leftarrow \mathbf{i}+(1,1, \ldots, 1)$}%

\uIf{$S_{\mathbf{i}}^{-}(m)>U$}{

\textbf{return} m

}
\uElseIf{$S_{\mathbf{i}}^{+}(m)<U$}{Set $\mathbf{i} \longleftarrow\left(i_{0}, i_{1}, \dots, i_{m}, 0\right)$ \\ Set $m \leftarrow m+1$}



\textbf{end} \\
Simulate $\ll \sim \text{Categorical}\left(\{\tilde w^m_{\ll};  \ll \in \mathbb{N}^{K} ; l_{1}+\ldots+l_{K}=m\}\right)$ \\
Simulate $\kk \sim \text{Categorical}\left(\{\tilde w_{\ll, \kk}; \kk \in \aMM_{i|i+1:T} \}\right)$ \\
Simulate $\xx_i \sim \Dir(\ll + \xx_{i-1}+\yy_i+\kk)$
}
\begin{quote}
\caption{\small Simulating from the Wright--Fisher forward kernel\label{algo_WF_sampling}}
\end{quote}
\end{algorithm}

\bibliographystyle{apalike}
\bibliography{input_files/library}

\end{document}


\title{Supplementary Information for ``Exact inference for a class of non-linear hidden Markov models on general state spaces''}

\author{Guillaume Kon Kam King \\ Universit\'e Paris-Saclay, INRAE, MaIAGE\\ 78350, Jouy-en-Josas, France \\ guillaume.kon-kam-king@inrae.fr\\
\vspace{0.5em}\\
Omiros Papaspiliopoulos\\ ICREA and Department of Economics and Business\\ Universitat Pompeu Fabra \\ Ram\'on Trias Fargas 25-27, 08005, Barcelona, Spain \\ omiros.papaspiliopoulos@upf.edu\\
\vspace{0.5em}\\
Matteo Ruggiero\\ University of Torino and Collegio Carlo Alberto \\
Corso Unione Sovietica 218/bis, 10134, Torino, Italy \\ matteo.ruggiero@unito.it}

\maketitle
\newpage
\tableofcontents 

\setcounter{equation}{0}
\setcounter{figure}{0}
\setcounter{table}{0}
\setcounter{page}{1}
\setcounter{section}{0}
\makeatletter
\renewcommand{\theequation}{S\arabic{equation}}
\renewcommand{\thefigure}{S\arabic{figure}}
\renewcommand{\thesection}{S\arabic{section}}
\renewcommand{\thealgocf}{S\arabic{algocf}}
\renewcommand{\bibnumfmt}[1]{[S#1]}
\renewcommand{\citenumfont}[1]{S#1}
\newtheorem{applemma}{Lemma}
\renewcommand{\theapplemma}{\thesection.\arabic{applemma}}
\setcounter{applemma}{0}

\newpage

\section{Proofs of some lemmas}\label{sec:proofs}

\begin{applemma}\label{lemma transition probabilities}

The transition probabilities
(\textcolor{blue}{$7$}) 
are
%
\begin{equation}
p_{\mm, \mm - \ii}(t, \theta) = \gamma_{\norm{\mm}, \norm{\ii} }C_{\norm{\mm}, \norm\mm - \norm\ii }(t) \mathrm{MVH}(\ii; \mm, \norm{\ii} ) \label{def_pmmi}
\end{equation}
%
where $\gamma_{\norm{\mm}, \norm{\ii} } = \prod_{h=0}^{\norm{\ii} -1}\lambda_{\norm{\mm} -h}$,
%
\begin{equation}\label{gamma mi cmmi}
C_{\norm{\mm}, \norm{\mm} - \norm{\ii} }(t) = (-1)^{\norm{\ii} }\sum_{k=0}^{\norm{\ii} }\frac{e^{-\lambda_{\norm{\mm} -k}\int_0^t \rho(\Theta_s)\d s}}{\prod_{0\le h \le \norm{\ii}, h\ne k}(\lambda_{\norm{\mm} -k}-\lambda_{\norm{\mm} -h})} 
\end{equation}
and $\mathrm{MVH}(\ii;\mm,\norm{\ii} )$ is the multivariate hypergeometric pmf evaluated at $\ii$, with parameters $\mm$ and $\norm{\ii}$.
\end{applemma}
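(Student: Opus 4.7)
The strategy is to factor $p_{\mm,\mm-\ii}(t,\theta)$ into (i) the probability that the total count drops from $\norm{\mm}$ to $\norm{\mm}-\norm{\ii}$ by time $t$ and (ii) the conditional distribution, given this drop, of how the $\norm{\ii}$ losses split across the components of $\mm$. The dynamics underlying equation (7) specify a pure death chain in which, at each downward jump, the lost unit is uniformly distributed over the surviving individuals, and the total holding rate out of a state with total count $n$ equals $\lambda_n\rho(\Theta_t)$, depending on the configuration only through the total. Hence, conditionally on the realised jump times and on the path of $\Theta$, the per-component loss counts are distributed as $\norm{\ii}$ uniform draws without replacement from a population of $\norm{\mm}$ with composition $\mm$, that is, as $\mathrm{MVH}(\ii;\mm,\norm{\ii})$. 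Because this conditional distribution does not depend on the jump times, the MVH factor simply pulls out of the marginal transition probability.

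It remains to compute the marginal probability $q_{n,n-j}(t,\theta)$ that the total count goes from $n=\norm{\mm}$ to $n-j$ with $j=\norm{\ii}$ by time $t$. Conditionally on $\Theta$, this is a time-inhomogeneous pure death chain with instantaneous death rate $\lambda_n\rho(\Theta_t)$ out of $n$. The monotone deterministic time change $u=\tau(t)=\int_0^t\rho(\Theta_s)\d s$ maps it to a time-homogeneous pure death chain with constant per-state rates $\lambda_n$. Granting momentarily the classical formula
\begin{equation*}
q^{\mathrm{hom}}_{n,n-j}(u)=\gamma_{n,j}(-1)^{j}\sum_{k=0}^{j}\frac{e^{-\lambda_{n-k}u}}{\prod_{0\le h\le j,\,h\ne k}(\lambda_{n-k}-\lambda_{n-h})},
\end{equation*}
the substitution $u=\tau(t)$ produces precisely $\gamma_{n,j}C_{n,n-j}(t)$, which combined with the MVH factor from the previous step yields \eqref{def_pmmi}.

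The homogeneous formula is standard (Tavar\'e-type). I would establish it by induction on $j$ from the Kolmogorov forward equation $\tfrac{\d}{\d u}q^{\mathrm{hom}}_{n,n-j}=\lambda_{n-j+1}q^{\mathrm{hom}}_{n,n-j+1}-\lambda_{n-j}q^{\mathrm{hom}}_{n,n-j}$ with base case $q^{\mathrm{hom}}_{n,n}(u)=e^{-\lambda_n u}$, the inductive step being handled by variation of constants; equivalently, the Laplace transform is a rational function whose partial-fraction expansion has residues exactly $\prod_{h\ne k}(\lambda_{n-k}-\lambda_{n-h})^{-1}$, matching the summand term by term. The delicate point of the whole argument is the exchangeability required in the first step: one must verify from the specification in (7) that the per-coordinate jump rates at $\mm$ are invariant under permutations of its entries, so that conditionally on the total-count jump times the compositions of successive losses are independent of those times and the multivariate hypergeometric identity applies. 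Once this invariance is checked, everything else reduces to the classical death-chain calculation and the deterministic time change.
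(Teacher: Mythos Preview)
The paper does not actually prove this lemma: its entire proof is a pointer to \cite{Papaspiliopoulos2014a}, Proposition~2.1. Your sketch is correct and reproduces precisely the argument in that reference: (i) observe that the per-coordinate death rates $m_j\lambda_{\norm{\mm}}\rho(\Theta_t)/\norm{\mm}$ make $\norm{M_t}$ an autonomous one-dimensional death chain with rates $\lambda_n\rho(\Theta_t)$, while the choice of which coordinate loses a unit is a uniform pick from the surviving individuals, yielding the $\mathrm{MVH}$ factor independently of the jump times; (ii) reduce the one-dimensional chain to the homogeneous case via the deterministic time change $u=\int_0^t\rho(\Theta_s)\,\d s$; (iii) invoke the classical explicit transition probabilities for a pure death chain with distinct rates, which is exactly the $\gamma_{n,j}C_{n,n-j}$ expression. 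The only caveat is that your ``delicate point'' about permutation invariance is immediate once one writes the coordinate rates explicitly as above, so there is nothing to check beyond reading off the definition behind equation~(7).
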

\begin{proof}
The proof can be found in \cite{Papaspiliopoulos2014a}, Proposition 2.1.
\end{proof}

\begin{applemma}\label{h_expect}
Let 
Assumptions \textcolor{blue}{1-2} 
hold. Then 
\begin{equation}
 \mathbb{E}^{x}\left[h(X_t, \mm, \theta)\right] = \sum_{\nn \le \mm}p_{\mm, \nn}(t; \theta)h(x, \nn, \Theta_t) 
\end{equation}
with $\Theta_t$ being the unique solution to 
(\textcolor{blue}{$6$}) 
with $\Theta_0=\theta$.
\end{applemma}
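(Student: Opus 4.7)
The plan is to exploit a duality relation between the forward process $X_t$ and an auxiliary dual process $(M_t,\Theta_t)$, with $h$ serving as the duality function.

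First, I would observe that the transition probabilities $p_{\mm,\nn}(t;\theta)$ of the previous lemma are those of a time-inhomogeneous pure-death process $(M_t)_{t\ge 0}$ on $\Z_+^d$ started at $M_0=\mm$, whose total death rate at time $t$ is $\lambda_{\norm{M_t}}\rho(\Theta_t)$ and which, upon a death event, selects a coordinate to decrement with multivariate-hypergeometric probabilities. Since the trajectory $(\Theta_t)_{t\ge 0}$ is deterministic given $\Theta_0=\theta$, the right-hand side of the claimed identity is nothing but $\E^{\mm,\theta}[h(x,M_t,\Theta_t)]$, so the statement becomes the pathwise duality
\begin{equation*}
\E^{x}\!\left[h(X_t,\mm,\theta)\right] \;=\; \E^{\mm,\theta}\!\left[h(x,M_t,\Theta_t)\right].
\end{equation*}

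The heart of the argument is a generator-level identity. Writing $\L^X$ for the generator of $X$ (acting in the $x$ variable) and $\mathcal{G}$ for the generator of the joint dual $(M,\Theta)$ (acting in $(\mm,\theta)$), I would verify the pointwise identity
\begin{equation*}
\L^X h(\cdot,\mm,\theta)(x) \;=\; \mathcal{G}\, h(x,\cdot,\cdot)(\mm,\theta),
\end{equation*}
valid for all $x$, $\mm$, $\theta$. Under Assumptions~1--2 this reduces to an algebraic computation: on the left, one applies the $x$-dynamics to the explicit form of $h$; on the right, one combines the infinitesimal contribution of a death event at rate $\lambda_{\norm{\mm}}\rho(\theta)$ with the deterministic drift of $\Theta$ prescribed by equation~(6). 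The ODE for $\Theta$ in~(6) is engineered precisely so that these two contributions balance, and Assumption~2 in particular guarantees that $h$ lies in a sufficiently rich common core of the two generators for the identity to be meaningful.

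Once the generator identity is in hand, a standard duality theorem (e.g.\ Ethier \& Kurtz 1986, Theorem~4.4.11) lifts it to the semigroup level: both $u(t):=\E^{x}[h(X_t,\mm,\theta)]$ and $v(t):=\E^{\mm,\theta}[h(x,M_t,\Theta_t)]$ satisfy the same linear initial-value problem in $t$ with common initial datum $h(x,\mm,\theta)$, whence $u(t)=v(t)$ for every $t\ge 0$. The main obstacle is precisely the generator identity; the remaining ingredients (integrability of $h$ along the respective semigroups, justification for differentiating under the expectation, uniqueness for the Cauchy problem) are routine under Assumptions~1--2.
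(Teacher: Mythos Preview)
Your argument is correct and structurally identical to the paper's: both reduce the claim to the duality identity $\E^{x}[h(X_t,\mm,\theta)]=\E^{\mm,\theta}[h(x,M_t,\Theta_t)]$ and then expand the right-hand side using that $\Theta_t$ is deterministic and $M_t$ is a pure-death process supported on $\{\nn:\nn\le\mm\}$. The only difference is one of scope: the paper treats the duality as already available---it is exactly equation~(\textcolor{blue}{8}) of the main text, part of the framework guaranteed by Assumptions~1--2---so its proof is a two-line invocation of~(\textcolor{blue}{8}) followed by the expansion, whereas you sketch the generator computation and the Ethier--Kurtz lifting that establish~(\textcolor{blue}{8}) in the first place. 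Your extra work is sound but, in the context of this supplement, unnecessary: the duality is a standing hypothesis rather than a result to be reproved here.
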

\begin{proof}
The statement follows from an application of 
(\textcolor{blue}{$8$}) 
with $\Theta_{0}=\theta$ and by noting that 
\begin{equation}
\begin{aligned}
\mathbb{E}^{\mm, \theta}\left[h(x, M_t, \Theta_t)\right] 
 = &\, \sum_{\nn\in \Z_{+}^{K}}p_{\mm,\nn}(t,\theta)h(x, \nn, \Theta_t)
 = \sum_{\nn \le \mm}p_{\mm, \nn}(t; \theta)h(x, \nn, \Theta_t) 
\end{aligned}
\end{equation} 
where $p_{\mm, \nn}(t; \theta)$ are as in 
(\textcolor{blue}{$7$})
.
\end{proof}
%

\section{Exact $L_2$ distances}\label{secA: Exact_L2_distances}

\subsection{$L_2$ distance between mixtures of gamma distributions}

\begin{lemma}\label{L2gamma}
 Consider two gamma mixtures $g = \sum_{i=1}^I g_i f_i$ and $h = \sum_{j=1}^J h_j f_j$ where $\forall i \in \N; f_i := \Ga(\alpha_i, \beta_i)$. 
 Let us further assume that $\forall i \in \N; \alpha_i>0.5$.
 Then:
 
 \begin{align}
 \int_{\R^+}(g-h)^2  = & \sum_{i, j=1}^I g_ig_j \frac{\beta_1^{g~\alpha_1^g}\beta_2^{g~\alpha_2^g}}{\Gamma(\alpha_1^g)\Gamma(\alpha_2^g)}\frac{\Gamma(\alpha_1^g+\alpha_2^g-1)}{(\beta_1^g+\beta_2^g)^{(\alpha_1^g+\alpha_2^g-1)}}  \\
 & \sum_{i, j=1}^J h_ih_j  \frac{\beta_1^{h~\alpha_1^h}\beta_2^{h~\alpha_2^h}}{\Gamma(\alpha_1^h)\Gamma(\alpha_2^h)}\frac{\Gamma(\alpha_1^h+\alpha_2^h-1)}{(\beta_1^h+\beta_2^h)^{(\alpha_1^h+\alpha_2^h-1)}} \\ 
 & - 2 \sum_{i=1}^I \sum_{j=1}^J g_i h_j \frac{\beta_1^{g~\alpha_1}\beta_2^{h~\alpha_2}}{\Gamma(\alpha_1^g)\Gamma(\alpha_2^h)}\frac{\Gamma(\alpha_1^g+\alpha_2^h-1)}{(\beta_1^g+\beta_2^h)^{(\alpha_1^g+\alpha_2^h-1)}}.
\end{align}
\end{lemma}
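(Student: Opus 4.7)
The plan is to expand the square, use linearity of integration to reduce everything to pairwise integrals of gamma densities, and then evaluate the resulting integrals in closed form using the gamma function.

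First, I would write
\begin{equation*}
\int_{\R^+}(g-h)^2 = \int_{\R^+} g^2 - 2\int_{\R^+} gh + \int_{\R^+} h^2,
\end{equation*}
and expand each square or product via the mixture representations, so that every term becomes a double sum of integrals of the form $\int_0^\infty f_i(x) f_j(x)\,\d x$ where $f_i(x) = \frac{\beta_i^{\alpha_i}}{\Gamma(\alpha_i)}x^{\alpha_i-1}e^{-\beta_i x}$ and similarly for $f_j$. By Fubini/linearity all swaps are valid as soon as the pairwise integrals are finite, which I address below.

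Next, I would evaluate the generic pairwise integral: pulling out the constants,
\begin{equation*}
\int_0^\infty f_i(x) f_j(x)\,\d x = \frac{\beta_i^{\alpha_i}\beta_j^{\alpha_j}}{\Gamma(\alpha_i)\Gamma(\alpha_j)} \int_0^\infty x^{\alpha_i+\alpha_j - 2} e^{-(\beta_i+\beta_j)x}\,\d x,
\end{equation*}
and recognising the integrand (up to normalisation) as the kernel of a $\Ga(\alpha_i+\alpha_j-1,\beta_i+\beta_j)$ density yields
\begin{equation*}
\int_0^\infty f_i(x) f_j(x)\,\d x = \frac{\beta_i^{\alpha_i}\beta_j^{\alpha_j}}{\Gamma(\alpha_i)\Gamma(\alpha_j)} \cdot \frac{\Gamma(\alpha_i+\alpha_j-1)}{(\beta_i+\beta_j)^{\alpha_i+\alpha_j-1}}.
\end{equation*}
Substituting this expression into the three double sums, with shape/rate pairs relabelled with the $g$ and $h$ superscripts as in the statement, gives exactly the claimed identity.

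The only delicate point, and the reason for the hypothesis $\alpha_i > 1/2$, is integrability: the integrand $x^{\alpha_i+\alpha_j-2}e^{-(\beta_i+\beta_j)x}$ is integrable on $(0,\infty)$ if and only if $\alpha_i+\alpha_j-1 > 0$. Since every pair of shapes considered, whether two come from $g$, two from $h$, or one from each, satisfies $\alpha_i+\alpha_j > 1$ under the assumption, each integral is finite and $\Gamma(\alpha_i+\alpha_j-1)$ is well-defined. I would flag this explicitly at the start of the proof so that the interchange of sum and integral is justified, after which the remainder is a routine closed-form gamma computation; this integrability check is really the only non-mechanical step.
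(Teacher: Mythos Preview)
Your proposal is correct and follows essentially the same approach as the paper: compute the generic pairwise integral $\int_0^\infty f_i f_j$ by recognising a $\Ga(\alpha_i+\alpha_j-1,\beta_i+\beta_j)$ kernel (valid precisely when $\alpha_i+\alpha_j>1$, hence the hypothesis $\alpha_i>1/2$), then expand $(g-h)^2$ into three double sums and substitute. The only difference is cosmetic ordering---the paper derives the pairwise formula first and then expands, whereas you expand first---and you are slightly more explicit about the Fubini justification.
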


\begin{proof}
 
 We denote $\forall i \in \N; f_i := \Ga(\alpha_i, \beta_i)$

\begin{align}
 \int_{\R^+} f_1f_2 & =  \int_{\R^+} \frac{\beta_1^{\alpha_1}\beta_2^{\alpha_2}}{\Gamma(\alpha_1)\Gamma(\alpha_2)}x^{(\alpha_1+\alpha_2 - 1)-1}e^{-(\beta_1+\beta_2)x}\\
  & =   \frac{\beta_1^{\alpha_1}\beta_2^{\alpha_2}}{\Gamma(\alpha_1)\Gamma(\alpha_2)}\frac{\Gamma(\alpha_1+\alpha_2-1)}{(\beta_1+\beta_2)^{(\alpha_1+\alpha_2-1)}}\int_{\R^+} \Ga(\alpha_1 + \alpha_2-1, \beta_1 + \beta_1) 
\end{align}

Now provided that $\alpha_1+\alpha_2 > 1$, which is implied by $\forall i \in \N; \alpha_i>0.5$:

\begin{equation}
   \int_{\R^+} f_1f_2 =   \frac{\beta_1^{\alpha_1}\beta_2^{\alpha_2}}{\Gamma(\alpha_1)\Gamma(\alpha_2)}\frac{\Gamma(\alpha_1+\alpha_2-1)}{(\beta_1+\beta_2)^{(\alpha_1+\alpha_2-1)}}
\end{equation}

We now consider two gamma mixtures $g = \sum_{i=1}^I g_i f_i$ and $h = \sum_{j=1}^J h_j f_j$.

\begin{align}
 (g-h)^2 & =  \left(\sum_{i=1}^I g_i f_i - \sum_{j=1}^J h_j f_j \right)^2 \\
 & =  \left(\sum_{i=1}^I g_i f_i \right)^2 + \left(\sum_{j=1}^J h_j f_j \right)^2 - 2 \sum_{i=1}^I g_i f_i  \sum_{j=1}^J h_j f_j \\
  & =  \sum_{i, j=1}^I g_ig_j f_if_j  + \sum_{i, j=1}^J h_ih_j f_if_j - 2 \sum_{i=1}^I \sum_{j=1}^J g_i h_j f_if_j 
\end{align}

\begin{equation} \label{product_expansion}
 \int_{\R^+}(g-h)^2  =  \sum_{i, j=1}^I g_ig_j \int_{\R^+} f_if_j  + \sum_{i, j=1}^J h_ih_j \int_{\R^+} f_if_j - 2 \sum_{i=1}^I \sum_{j=1}^J g_i h_j \int_{\R^+} f_if_j  
\end{equation}

Let us now write $g = \sum_{i=1}^I g_i f_i^g = \Ga(\alpha_i^g, \beta_i^g)$ and $h = \sum_{i=1}^I h_i f_i^h = \Ga(\alpha_i^h, \beta_i^h)$

\begin{align}
 \int_{\R^+}(g-h)^2  = & \sum_{i, j=1}^I g_ig_j \frac{\beta_1^{g~\alpha_1^g}\beta_2^{g~\alpha_2^g}}{\Gamma(\alpha_1^g)\Gamma(\alpha_2^g)}\frac{\Gamma(\alpha_1^g+\alpha_2^g-1)}{(\beta_1^g+\beta_2^g)^{(\alpha_1^g+\alpha_2^g-1)}}  + \sum_{i, j=1}^J h_ih_j  \frac{\beta_1^{h~\alpha_1^h}\beta_2^{h~\alpha_2^h}}{\Gamma(\alpha_1^h)\Gamma(\alpha_2^h)}\frac{\Gamma(\alpha_1^h+\alpha_2^h-1)}{(\beta_1^h+\beta_2^h)^{(\alpha_1^h+\alpha_2^h-1)}} \\ 
 & - 2 \sum_{i=1}^I \sum_{j=1}^J g_i h_j \frac{\beta_1^{g~\alpha_1}\beta_2^{h~\alpha_2}}{\Gamma(\alpha_1^g)\Gamma(\alpha_2^h)}\frac{\Gamma(\alpha_1^g+\alpha_2^h-1)}{(\beta_1^g+\beta_2^h)^{(\alpha_1^g+\alpha_2^h-1)}}
\end{align}

\end{proof}

\subsection{$L_2$ distance between mixtures of Dirichlet distributions}

\begin{lemma}
 Consider two Dirichlet mixtures $g = \sum_{i=1}^I g_i f_i$ and $h = \sum_{j=1}^J h_j f_j$ where $\forall i \in \N; f_i := \Dir(\aa_i) = \frac{1}{B(\aa_i)}\prod_{j=1}^Kx_j^{\alpha_{i,j}-1}$ where $B(\aa_i) = \frac{\prod_{j=1}^K\Gamma(\alpha_{i,j})}{\Gamma(\sum_{j=1}^K\alpha_{i,j})}$. 
 Let us denote $\nabla_K$ the K-dimensional simplex.
 Let us further assume that $\forall i \in \N; \alpha_i>0.5$.
 Then:
 
\begin{align}
  \int_{\nabla_K} (g-h)^2  = & \sum_{i, j=1}^I g_ig_j \frac{B(\aa_i^g+\aa_j^g-1)}{B(\aa_i^g)B(\aa_j^g)}  + \sum_{i, j=1}^I h_ih_j \frac{B(\aa_i^h+\aa_j^h-1)}{B(\aa_i^h)B(\aa_j^h)} \\ 
 & - 2 \sum_{i=1}^I \sum_{j=1}^J g_i h_j \frac{B(\aa_i^g+\aa_j^h-1)}{B(\aa_i^g)B(\aa_j^h)}
\end{align}
\end{lemma}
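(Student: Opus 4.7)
The plan is to mirror the proof of Lemma~\ref{L2gamma} for the gamma case, replacing the univariate gamma integrals over $\R^+$ by Dirichlet integrals over the simplex $\nabla_K$. The two ingredients needed are a closed form for the integral of a product of two Dirichlet densities, and the standard expansion of $(g-h)^2$ into three double sums.

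For the product step, given $f_i = \Dir(\aa_i)$ and $f_j = \Dir(\aa_j)$ one computes
\begin{equation*}
f_i(x)\, f_j(x) = \frac{1}{B(\aa_i)\,B(\aa_j)} \prod_{k=1}^K x_k^{\alpha_{i,k}+\alpha_{j,k}-2},
\end{equation*}
which up to a normalising constant is a $\Dir(\aa_i + \aa_j - \one)$ density. Multiplying and dividing by $B(\aa_i + \aa_j - \one)$ and integrating over $\nabla_K$ yields
\begin{equation*}
\int_{\nabla_K} f_i f_j = \frac{B(\aa_i + \aa_j - \one)}{B(\aa_i)\,B(\aa_j)},
\end{equation*}
valid provided every component of $\aa_i + \aa_j - \one$ is strictly positive, which is guaranteed by the hypothesis that every concentration parameter exceeds $0.5$.

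For the second step, I would expand
\begin{equation*}
(g-h)^2 = \sum_{i,j=1}^I g_i g_j\, f_i^g f_j^g + \sum_{i,j=1}^J h_i h_j\, f_i^h f_j^h - 2 \sum_{i=1}^I \sum_{j=1}^J g_i h_j\, f_i^g f_j^h,
\end{equation*}
integrate termwise over $\nabla_K$, and substitute the product formula into each of the three resulting double sums. This immediately gives the stated identity.

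The only substantive point, just as in the gamma case, is checking the integrability condition ensuring finiteness of $B(\aa_i + \aa_j - \one)$; this is precisely where the $\alpha > 0.5$ assumption enters. Everything else is routine index bookkeeping, so no serious obstacle is expected.
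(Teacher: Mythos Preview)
Your proposal is correct and follows essentially the same route as the paper: compute $\int_{\nabla_K} f_i f_j = B(\aa_i+\aa_j-\one)/\big(B(\aa_i)B(\aa_j)\big)$ by recognising the integrand as an unnormalised $\Dir(\aa_i+\aa_j-\one)$ density (using the $\alpha>0.5$ assumption to ensure positivity of every component), then expand $(g-h)^2$ into the three double sums exactly as in the gamma case and integrate termwise.
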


\begin{proof}

We denote $\forall i \in \N; f_i := \Dir(\aa_i) = \frac{1}{B(\aa_i)}\prod_{j=1}^Kx_j^{\alpha_{i,j}-1}$ where $B(\aa_i) = \frac{\prod_{j=1}^K\Gamma(\alpha_{i,j})}{\Gamma(\sum_{j=1}^K\alpha_{i,j})}$ and $\nabla_K$ the K-dimensional simplex.

\begin{align}
 \int_{\nabla_K} f_1f_2 & =  \int_{\nabla_K} \frac{1}{B(\aa_1)B(\aa_2)}\prod_{j=1}^Kx_j^{\alpha_{1,j} + \alpha_{2,j}-2}\\
  & =   \frac{B(\aa_1+\aa_2-1)}{B(\aa_1)B(\aa_2)}\int_{\nabla_K} \Dir(\aa_1 + \aa_2-1)
\end{align}

Now provided that $\forall j \in \{1,\ldots, K\}, \alpha_{1,j} + \alpha_{2,j} > 1$:

\begin{equation}
   \int_{\nabla_K} f_1f_2 =   \frac{B(\aa_1+\aa_2-1)}{B(\aa_1)B(\aa_2)}
\end{equation}

We now consider two Dirichlet mixtures $g = \sum_{i=1}^I g_i f_i$ and $h = \sum_{j=1}^J h_j f_j$.

Following the product expansion in the previous subsection, we can also obtain the exact expression:

\begin{align}
  \int_{\nabla_K} (g-h)^2  = & \sum_{i, j=1}^I g_ig_j \frac{B(\aa_i^g+\aa_j^g-1)}{B(\aa_i^g)B(\aa_j^g)}  + \sum_{i, j=1}^I h_ih_j \frac{B(\aa_i^h+\aa_j^h-1)}{B(\aa_i^h)B(\aa_j^h)} \\ 
 & - 2 \sum_{i=1}^I \sum_{j=1}^J g_i h_j \frac{B(\aa_i^g+\aa_j^h-1)}{B(\aa_i^g)B(\aa_j^h)}
\end{align} 
\end{proof}

\section{Additional algorithm}

\begin{algorithm}[t]
\SetAlgoLined

\textbf{Pruning setting}: ON (approximate filtering) / OFF (exact filtering)

\KwIn{$Y_{0:n}$, $t_{0:n}$ and $\nu = h(x, \oo, \theta_0) \in \mathcal{F}$ for some $\theta_0 \in \Th$}

\KwResult{$\vartheta_{i|0:i}$, $\MM_{i|0:i}$, $W_{0:n}$ with $W_i = \{w_\mm^i, \mm \in \MM_{i|0:i}\}$ and $\vartheta_{i|0:i-1}$, $\MM_{i|0:i-1}$, $W'_{1:n}$ with $W'_i = \{w_\mm^{i'}, \mm \in \MM_{i|0:i-1}\}$ and the likelihood $p(y_{0:T})$.}

\SetKwBlock{Begin}{Initialise}{}

\Begin{

Set $\vartheta_{0|0} = T(Y_0, \theta_0)$ with $T$ as in 
(\textcolor{blue}{$12$})
\\

Set $\MM_{0|0} = \{t(Y_0, \oo)\} = \{\mm^*\}$ and $W_0 = \{1\}$ with $t$ as in Assumption \textcolor{blue}{$3$}
\\

Compute $\vartheta_{1|0}$ from $\vartheta_{0|0}$ as in (\textcolor{blue}{$12$})
\\

Set $\MM_{1|0} = \B(\MM_{0|0})$ and $W'_1 = \{p_{\mm^*, \nn}(\Delta, \vartheta_{0|0}), \nn \in \MM_{1|0}\}$ with $\B$ as in 
(\textcolor{blue}{$5$}) 
and $p_{\mm, \nn}$ as in
\eqref{def_pmmi}

}

\For{$i$ from $1$ to $n$}{

\SetKwBlock{Begin}{Update}{}

\Begin{

Set $\vartheta_{i|0:i} = T(Y_i, \vartheta_{i|0:i-1})$\\

Set $W_{i} = \{\frac{w_\mm^{i'} \mu_{\mm, \vartheta_{i|0:i-1}}(Y_i)}{\sum_{\nn \in \MM_{i|0:i-1}}w_\nn^{i'} \mu_{\nn, \vartheta_{i}}(Y_i)}, \mm \in \MM_{i|0:i-1}\}$ with $\mu_{\mm, \theta}$ defined as in 
(\textcolor{blue}{$11$})

Set $\MM_{i|0:i} = \{t(Y_i, \mm), \mm \in \MM_{i|0:i-1}\}$ and update the labels in $W_i$\\

}

\uIf{pruning \emph{ON}}{

Prune $\MM_{i|0:i}$ and remove the corresponding weights in $W_i$\\

Normalise the weights in $W_i$

}

\SetKwBlock{Begin}{Predict}{}

\Begin{

Compute $\vartheta_{i+1|0:i}$ from $\vartheta_{i|0:i}$\\

Set $\MM_{i+1|0:i} = \B(\MM_{i|0:i})$ and $W'_{i+1} = \left\{\displaystyle{\sum_{\mm \in \MM_{i|0:i}, \mm \ge \nn}}w_{\mm}^ip_{\mm, \nn}(\Delta,\vartheta_{i|0:i}), \nn \in \MM_{i+1|0:i}\right\}$
}

}
\SetKwBlock{Begin}{Compute likelihood}{end}

\Begin{
Compute the likelihood using $\vartheta_{i|0:i-1}$, $\MM_{i|0:i-1}$ and $W'_{1:n}$ as in 
Equation (\textcolor{blue}{$16$})

}
\begin{quote}
\caption{\small Filtering and likelihood\label{algo_filtering_and_lik}}
\end{quote}
\end{algorithm}

\section{Autocorrelation function plots for all parameters}

\includegraphics[width = \textwidth]{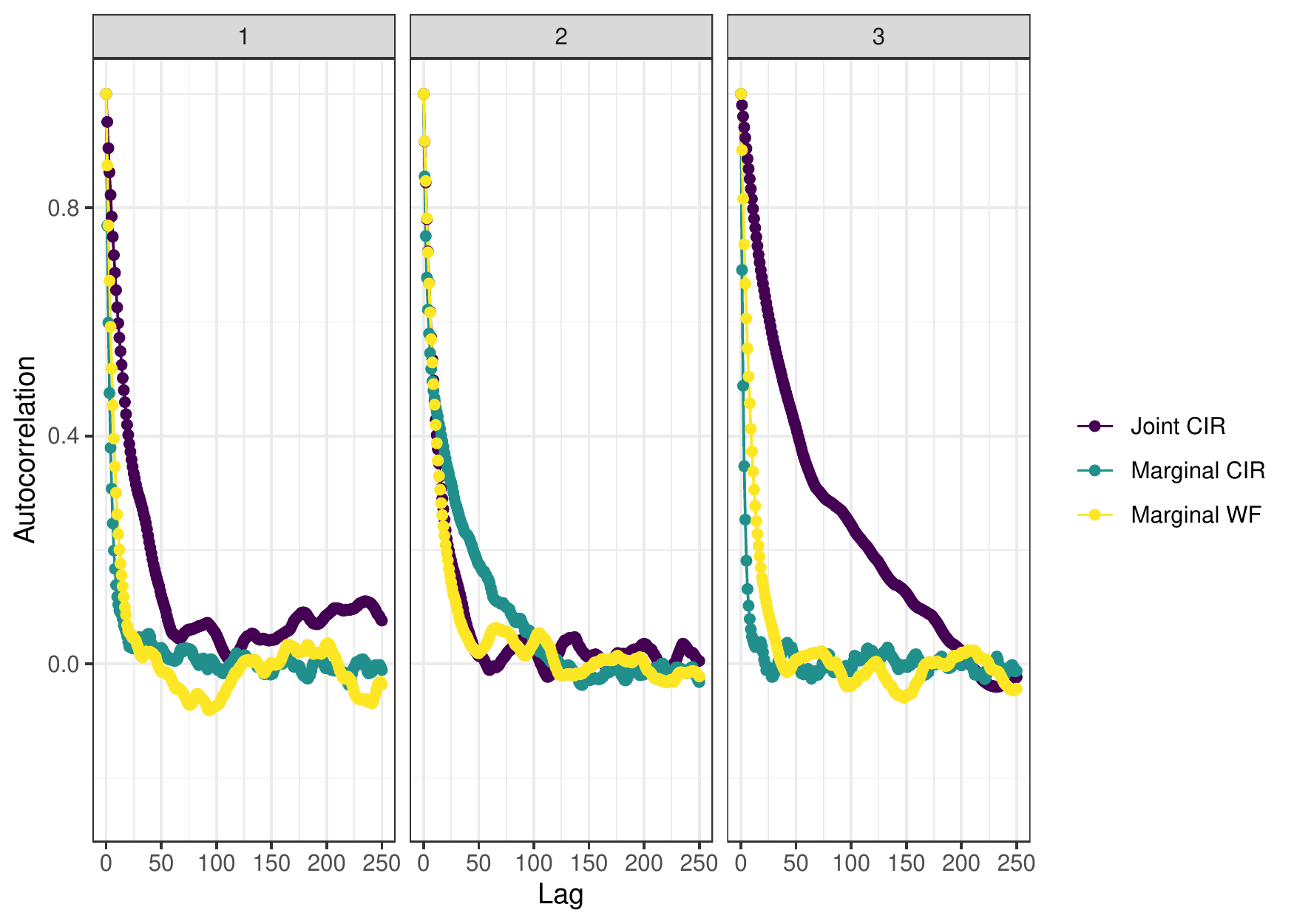}
\captionof{figure}{Autocorrelation function of the \gls{MCMC} chains for $(\alpha, \beta, \sigma)$ (\gls{CIR}) and for $\aa$ (\gls{WF}). The darkest line corresponds to the joint \gls{CIR} inference, the intermediate line corresponds to the marginal \gls{CIR} inference and the lightest line to the marginal \gls{WF} inference.\label{fig:ACF_plots_all_pars}
}

\bibliographystyle{apalike}
\bibliography{input_files/library}